\newtheorem{theo}{Theorem}[section]
\newtheorem{corollary}{Corollary}[section]
\newtheorem{lema}{Lemma}[section]
\newcommand{\FF}{\mathcal{F}}
\newcommand{\GG}{\mathcal{G}}
\newcommand{\HH}{\mathcal{H}}
\newcommand{\pmu}{\partial_{\mu}}
\newcommand{\pnu}{\partial_{\nu}}
\newcommand{\nmu}{\nabla_{\mu}}
\newcommand{\nnu}{\nabla_{\nu}}
\newcommand{\dx}{\mathrm{d}^nx}
\newcommand{\Real}{\mathbb{R}}
\begin{document}

\title{Some Cosmological Solutions of a Nonlocal Modified Gravity}

\author[affil1]{Ivan Dimitrijevic}
\ead{ivand@matf.bg.ac.rs}
\author[affil2]{Branko Dragovich}
\ead{dragovich@ipb.ac.rs}
\author[affil3]{Jelena Grujic}
\ead{jelenagg@gmail.com}
\author[affil1]{Zoran Rakic}
\ead{zrakic@matf.bg.ac.rs}
\address[affil1]{Faculty of Mathematics, University of Belgrade, Studentski trg 16,  Belgrade, Serbia}
\address[affil2]{Institute of Physics, University of Belgrade, Pregrevica 118, 11080 Belgrade, Serbia}
\address[affil3]{Teacher Education Faculty, University of Belgrade, Kraljice Natalije 43, Belgrade, Serbia}
\newcommand{\AuthorNames}{Ivan Dimitrijevic et al.}

\newcommand{\FilMSC}{83D05, 53B21,  53B50; Secondary 53C25, 83F05}
\newcommand{\FilKeywords}{nonlocal gravity, cosmological solutions, equations of motion, calculus of variations, pseudo-Riemannian manifold. }
\newcommand{\FilCommunicated}{(name of the Editor, mandatory)}
\newcommand{\FilSupport}{Work on this paper was supported by Ministry of Education, Science and Technological Development of the Republic of Serbia, grant No 174012.}

\begin{abstract}
We consider nonlocal modification of the Einstein theory of  gravity in framework of the pseudo-Riemannian
geometry. The nonlocal term has the form $\mathcal{H}(R) \mathcal{F}(\Box)\mathcal {G}(R)$, where
$\HH$ and $\GG$ are differentiable functions of the scalar curvature $R,$ and $ \mathcal{F}(\Box)= \displaystyle \sum_{n =0}^{\infty}
f_{n}\Box^{n}$ is an analytic function of the d'Alambert operator $\Box .$ Using calculus of variations of the action functional,  we derived the corresponding equations of motion. The variation of action is induced by variation of the gravitational field, which is the metric tensor $g_{\mu\nu}$.
Cosmological solutions are found for the case when the Ricci scalar $R$ is  constant.
\end{abstract}

\maketitle

\makeatletter
\renewcommand\@makefnmark%
{\mbox{\textsuperscript{\normalfont\@thefnmark)}}}
\makeatother

\section{Introduction}

Although very successful, Einstein theory of gravity is not a final theory. There are  many its modifications,
which are motivated by quantum gravity, string theory, astrophysics and cosmology (for a  review, see \cite{clifton}).
One of very promising directions of research is {\it nonlocal modified gravity} and its applications to cosmology (as a review, see
 \cite{nojiri,woodard} and \cite{dragovich0}). To solve cosmological Big Bang singularity, nonlocal gravity with replacement $R \to
 R + C R \mathcal{F}(\Box)R$  in the Einstein-Hilbert action was proposed in \cite{biswas1}. This nonlocal model is further elaborated
 in the series of papers \cite{biswas2,biswas3,biswas3+,koshelev,dragovich1,dragovich2,biswas4}.

In  \cite{dragovich3} we introduced a new approach to nonlocal gravity given by the  action
\begin{equation} \label{eq-1.2}
S =  \int d^{4}x \sqrt{-g}\Big(\frac{R}{16 \pi G} +  R^{-1}
\mathcal{F}(\Box) R  \Big),
\end{equation}
 where  the d'Alembert operator is $\Box = \frac{1}{\sqrt{-g}} \partial_{\mu}
\sqrt{-g} g^{\mu\nu} \partial_{\nu},$ $\, g = det(g_{\mu\nu}) .$  The  nonlocal term $R^{-1}
\mathcal{F}(\Box) R = f_0 + R^{-1} \sum_{n =1}^{\infty} f_{n}\Box^{n} R$ contains $f_0$ which can be connected with the cosmological constant as $f_0 = - \frac{\Lambda}{8 \pi G}.$ This term is also invariant under transformation $R \to C R ,$ where $C$ is a constant, i.e. this nonlocality does not depend on  magnitude of the scalar curvature $R \neq 0 .$

In this paper we consider $n$-dimensional pseudo-Riemannian manifold $M$ with metric $g_{\mu\nu}$ of signature $(n_-,n_+)$.  Our nonlocal gravity model here is
larger than \eqref{eq-1.2}  and given by the action
\begin{equation} \label{lag:1}
S = \displaystyle \int_M \Big(\frac{R-2 \Lambda}{16\pi G} + \mathcal{H}(R) \mathcal{F}(\Box)\mathcal {G}(R) \Big)\sqrt{|g|} \dx,
\end{equation}
which is a functional of metric (gravitational field) $g_{\mu\nu},$ where   $\HH$ and $\GG$ are differentiable functions of
the scalar curvature $R$, and $\Lambda$ is cosmological constant.

\section{Variation of the action functional}

 Let us  introduce the following auxiliary  functionals
\begin{align}
S_0 &= \int_M (R-2\Lambda) \sqrt{|g|} \; \dx, \qquad
S_1 = \int_M \mathcal{H}(R) \mathcal{F}(\Box)\mathcal {G}(R) \sqrt{|g|} \; \dx.
\end{align}
Then the variations of $S_0$ and $S_1$ can be considered separately and the variation of  (\ref{lag:1}) can be expressed as
\begin{equation}
\delta S = \frac{1}{16\pi G} \delta S_0 +  \delta S_1. \label{varS}
\end{equation}

 Note that variations of the metric tensor elements and their first derivatives are zero on the boundary of manifold $M$, i.e. $\delta g_{\mu\nu}|_{\partial M} = 0$, $\delta \partial_\lambda g_{\mu\nu}|_{\partial M} = 0$.

\begin{lema} \label{lema:1}
Let $M$ be a pseudo-Riemannian manifold. Then the following basic relations hold:

\begin{align*}
\frac{\partial g^{\mu\nu}}{\partial x^{\sigma}}&= -g^{\mu\alpha}\Gamma_{\sigma\alpha}^{\nu}- g^{\nu\alpha}\Gamma_{\sigma\alpha}^{\mu},
& \delta g &= g g^{\mu\nu}\delta g_{\mu\nu}= - g g_{\mu\nu}\delta g^{\mu\nu}, \\
\Gamma_{\mu\nu}^{\mu}&= \frac{\partial}{\partial x^{\nu}}\ln \sqrt{|g|},
& \delta \sqrt{|g|} &= -\frac 12  g_{\mu\nu} \sqrt{|g|}\delta g^{\mu\nu}, \\
\Box &= \nabla^\mu \nmu = \frac 1{\sqrt{|g|}} \pmu(\sqrt{|g|} g^{\mu\nu}\pnu),
& \delta R &= R_{\mu\nu} \delta g^{\mu\nu} + g_{\mu\nu} \Box \delta g^{\mu\nu} - \nmu\nnu \delta g^{\mu\nu}.
\end{align*}
\end{lema}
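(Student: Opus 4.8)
The plan is to prove these six identities in an order in which each may lean on its predecessors, since they are classical facts of pseudo-Riemannian geometry but are logically interlocked. First I would dispatch the two differential identities in the left column, both of which rest on metric compatibility $\nabla_\sigma g_{\mu\nu}=0$. For the derivative of the inverse metric I would start from $g^{\mu\nu}g_{\nu\lambda}=\delta^\mu_\lambda$, differentiate with respect to $x^\sigma$, and substitute $\partial_\sigma g_{\nu\lambda}=\Gamma^\alpha_{\sigma\nu}g_{\alpha\lambda}+\Gamma^\alpha_{\sigma\lambda}g_{\nu\alpha}$, which is just compatibility written out; contracting the resulting relation with $g^{\lambda\rho}$ and relabelling indices yields $\partial_\sigma g^{\mu\nu}=-g^{\mu\alpha}\Gamma^\nu_{\sigma\alpha}-g^{\nu\alpha}\Gamma^\mu_{\sigma\alpha}$. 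For the contracted Christoffel symbol I would use the explicit expression $\Gamma^\mu_{\mu\nu}=\tfrac12 g^{\mu\alpha}\partial_\nu g_{\mu\alpha}$, where the two other terms cancel by symmetry of $g_{\mu\alpha}$, and combine it with Jacobi's formula $\partial_\nu\ln|g|=g^{\mu\alpha}\partial_\nu g_{\mu\alpha}$ to obtain $\Gamma^\mu_{\mu\nu}=\partial_\nu\ln\sqrt{|g|}$.

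The two variational identities in the right column both reduce to Jacobi's formula. I would establish $\delta g = g\,g^{\mu\nu}\delta g_{\mu\nu}$ by cofactor expansion, using $\partial g/\partial g_{\mu\nu}=g\,g^{\mu\nu}$; then varying $g^{\mu\nu}g_{\mu\nu}=n$ gives $g^{\mu\nu}\delta g_{\mu\nu}=-g_{\mu\nu}\delta g^{\mu\nu}$, which supplies the second form $\delta g = -g\,g_{\mu\nu}\delta g^{\mu\nu}$. From here the variation of the volume factor follows from $\delta\sqrt{|g|}=\mathrm{sgn}(g)\,\delta g/(2\sqrt{|g|})$; substituting the second form and using $\mathrm{sgn}(g)\,g=|g|$ collapses this to $\delta\sqrt{|g|}=-\tfrac12\sqrt{|g|}\,g_{\mu\nu}\delta g^{\mu\nu}$.

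Next I would verify the operator identity on a scalar field $f$, the only case relevant to the action. Since $\nabla_\nu f=\partial_\nu f$ for a scalar, $\Box f=g^{\mu\nu}\nabla_\mu\nabla_\nu f=\nabla_\mu V^\mu$ with $V^\mu=g^{\mu\nu}\partial_\nu f$, and the covariant divergence satisfies $\nabla_\mu V^\mu=\partial_\mu V^\mu+\Gamma^\mu_{\mu\lambda}V^\lambda=\tfrac1{\sqrt{|g|}}\partial_\mu(\sqrt{|g|}\,V^\mu)$ precisely by the contracted Christoffel identity just proved. Thus relation five is a direct corollary of relation three, and no new computation is needed beyond inserting $V^\mu=g^{\mu\nu}\partial_\nu$.

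The main work, and the step I expect to be the principal obstacle, is the variation of the scalar curvature. Writing $R=g^{\mu\nu}R_{\mu\nu}$ gives $\delta R=R_{\mu\nu}\delta g^{\mu\nu}+g^{\mu\nu}\delta R_{\mu\nu}$, so the first term appears at once. For the second I would invoke the Palatini identity $\delta R_{\mu\nu}=\nabla_\lambda\delta\Gamma^\lambda_{\mu\nu}-\nabla_\nu\delta\Gamma^\lambda_{\mu\lambda}$, in which each $\delta\Gamma^\lambda_{\mu\nu}$ is a genuine tensor; contracting with $g^{\mu\nu}$ and pulling the metric through the covariant derivative, which is legitimate by compatibility, yields the total divergence $g^{\mu\nu}\delta R_{\mu\nu}=\nabla_\lambda\big(g^{\mu\nu}\delta\Gamma^\lambda_{\mu\nu}-g^{\mu\lambda}\delta\Gamma^\nu_{\mu\nu}\big)$. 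The delicate part is re-expressing this in terms of $\delta g^{\mu\nu}$: substituting $\delta\Gamma^\lambda_{\mu\nu}=\tfrac12 g^{\lambda\rho}\big(\nabla_\mu\delta g_{\rho\nu}+\nabla_\nu\delta g_{\rho\mu}-\nabla_\rho\delta g_{\mu\nu}\big)$ and converting $\delta g_{\mu\nu}$ to $\delta g^{\mu\nu}$ through the identity of relation two, the contraction simplifies to $g_{\mu\nu}\Box\,\delta g^{\mu\nu}-\nmu\nnu\,\delta g^{\mu\nu}$. The hard part will be keeping the index bookkeeping consistent and commuting $g$ past $\nabla$ correctly, where sign and contraction errors are easiest to make, so I would carry all indices explicitly rather than abbreviate. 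I note finally that this $g^{\mu\nu}\delta R_{\mu\nu}$ contribution is a total covariant divergence and, under the stated boundary conditions $\delta g_{\mu\nu}|_{\partial M}=0$ and $\delta\partial_\lambda g_{\mu\nu}|_{\partial M}=0$, integrates to zero, which is what justifies carrying it along in the subsequent derivation of the equations of motion.
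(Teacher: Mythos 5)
Your proposal is correct. The paper offers no proof of this lemma at all --- it is stated as a list of ``basic relations'' and used without derivation --- so there is no argument of the authors' to compare yours against; what you have written is the standard textbook derivation, and every step checks out: the inverse-metric derivative from differentiating $g^{\mu\nu}g_{\nu\lambda}=\delta^\mu_\lambda$ together with metric compatibility, the contracted Christoffel symbol from Jacobi's formula, the determinant and volume-element variations from the cofactor expansion, the divergence form of $\Box$ on scalars as a corollary of the contracted Christoffel identity, and $\delta R$ via the Palatini identity. The one place your route genuinely diverges from what the paper does elsewhere is the term $g^{\mu\nu}\delta R_{\mu\nu}$: you obtain it covariantly, writing $g^{\mu\nu}\delta R_{\mu\nu}=\nabla_\lambda\bigl(g^{\mu\nu}\delta\Gamma^\lambda_{\mu\nu}-g^{\mu\lambda}\delta\Gamma^\nu_{\mu\nu}\bigr)$ from the Palatini identity and then substituting $\delta\Gamma$ in terms of $\nabla\delta g$, whereas the paper (in its proof of Lemma 2.2) reaches the same vector $W^\nu=-g^{\mu\alpha}\delta\Gamma^\nu_{\mu\alpha}+g^{\mu\nu}\delta\Gamma^\alpha_{\mu\alpha}$ by a brute-force partial-derivative computation that verifies $g^{\mu\nu}\delta R_{\mu\nu}=\tfrac1{\sqrt{|g|}}\partial_\nu(\sqrt{|g|}W^\nu)$ directly, and it only ever uses the integrated (boundary-term) form, never the pointwise identity $g^{\mu\nu}\delta R_{\mu\nu}=g_{\mu\nu}\Box\,\delta g^{\mu\nu}-\nabla_\mu\nabla_\nu\delta g^{\mu\nu}$ that the last line of Lemma 1 asserts and that your Palatini computation actually delivers. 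Your approach is therefore slightly stronger than what the paper establishes on paper; the only caveat is that you describe the final index manipulation as something you ``would'' carry out rather than carrying it out, but the substitution does close as claimed, so this is a matter of completeness of exposition rather than a gap in the argument.
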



\begin{lema} \label{lema:2}
On the manifold $M$ holds \; $\displaystyle\int_M g^{\mu \nu} \delta R_{\mu\nu}\sqrt{|g|} \;\dx = 0$.
\end{lema}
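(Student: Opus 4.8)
The plan is to show that the integrand $g^{\mu\nu}\delta R_{\mu\nu}$ is a total covariant divergence, so that the integral collapses to a boundary term which vanishes under the stated conditions $\delta g_{\mu\nu}|_{\partial M}=0$ and $\delta\partial_\lambda g_{\mu\nu}|_{\partial M}=0$.

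First I would record the Palatini identity. Although the Christoffel symbols $\Gamma^\lambda_{\mu\nu}$ are not tensorial, their variation $\delta\Gamma^\lambda_{\mu\nu}$ is a $(1,2)$-tensor, being the difference of two connection coefficients. Starting from $R_{\mu\nu}=\partial_\lambda\Gamma^\lambda_{\mu\nu}-\partial_\nu\Gamma^\lambda_{\mu\lambda}+\Gamma^\lambda_{\lambda\sigma}\Gamma^\sigma_{\mu\nu}-\Gamma^\lambda_{\nu\sigma}\Gamma^\sigma_{\mu\lambda}$ and varying, I would collect terms and recognize the result in manifestly covariant form,
\[
\delta R_{\mu\nu} = \nmu'\!\big(\delta\Gamma^\lambda_{\mu\nu}\big) \;\longrightarrow\; \nabla_\lambda\big(\delta\Gamma^\lambda_{\mu\nu}\big)-\nnu\big(\delta\Gamma^\lambda_{\mu\lambda}\big).
\]
The cleanest verification is to evaluate both sides at an arbitrary point in normal coordinates, where $\Gamma^\lambda_{\mu\nu}=0$ (so $\nabla$ reduces to $\partial$) while $\partial\Gamma\neq0$; since both sides are tensors, agreement in one frame gives agreement in all frames.

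Next, using metric compatibility $\nabla_\lambda g^{\mu\nu}=0$, I would contract with $g^{\mu\nu}$ and pull the metric through the covariant derivatives, obtaining
\[
g^{\mu\nu}\delta R_{\mu\nu} = \nabla_\lambda W^\lambda, \qquad W^\lambda := g^{\mu\nu}\delta\Gamma^\lambda_{\mu\nu}-g^{\mu\lambda}\delta\Gamma^\sigma_{\mu\sigma}.
\]
By the trace-of-Christoffel relation $\Gamma^\mu_{\mu\nu}=\partial_\nu\ln\sqrt{|g|}$ from Lemma~\ref{lema:1}, the covariant divergence becomes an ordinary one, $\nabla_\lambda W^\lambda=\tfrac{1}{\sqrt{|g|}}\pmu(\sqrt{|g|}\,W^\mu)$, whence
\[
\int_M g^{\mu\nu}\delta R_{\mu\nu}\sqrt{|g|}\,\dx = \int_M \partial_\lambda\big(\sqrt{|g|}\,W^\lambda\big)\,\dx,
\]
and the Gauss--Stokes divergence theorem rewrites the right-hand side as an integral over $\partial M$.

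Finally I would argue that this boundary term vanishes. Expressing the varied connection as $\delta\Gamma^\lambda_{\mu\nu}=\tfrac12 g^{\lambda\rho}\big(\nabla_\mu\delta g_{\nu\rho}+\nabla_\nu\delta g_{\mu\rho}-\nabla_\rho\delta g_{\mu\nu}\big)$, one sees that $\delta\Gamma$, and hence $W^\lambda$, is built linearly from $\delta g_{\mu\nu}$ together with its first derivatives. Since both of these vanish on $\partial M$ by hypothesis, $W^\lambda|_{\partial M}=0$, the surface integral is zero, and the lemma follows. The only genuinely delicate step is establishing the Palatini identity in covariant form; once $\delta R_{\mu\nu}$ is so expressed, the reduction to a vanishing boundary term is routine.
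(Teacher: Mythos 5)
Your proposal is correct, and at the structural level it is the same argument as the paper's: both identify $g^{\mu\nu}\delta R_{\mu\nu}$ as $\tfrac{1}{\sqrt{|g|}}\partial_\nu\big(\sqrt{|g|}\,W^\nu\big)$ for the same vector $W^\nu$ built from $g^{\mu\nu}$ and $\delta\Gamma$ (the paper's $W^\nu=-g^{\mu\alpha}\delta\Gamma^\nu_{\mu\alpha}+g^{\mu\nu}\delta\Gamma^\alpha_{\mu\alpha}$ is the negative of yours, which traces back to the paper using the opposite sign convention for $R_{\mu\nu}$ in its final display --- immaterial for proving the integral vanishes), then both apply Gauss--Stokes and kill the boundary term using $\delta g_{\mu\nu}|_{\partial M}=0$ and $\delta\partial_\lambda g_{\mu\nu}|_{\partial M}=0$. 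Where you differ is in how the divergence identity is established: the paper verifies it by a direct coordinate computation, expanding $\partial_\nu(\sqrt{|g|}W^\nu)$ with the identities of Lemma~\ref{lema:1} (the formula for $\partial_\sigma g^{\mu\nu}$ and the trace of the Christoffel symbols) and regrouping terms until $g^{\mu\nu}\delta R_{\mu\nu}$ appears; you instead invoke the covariant Palatini identity $\delta R_{\mu\nu}=\nabla_\lambda\delta\Gamma^\lambda_{\mu\nu}-\nabla_\nu\delta\Gamma^\lambda_{\mu\lambda}$, justified by the tensoriality of $\delta\Gamma$ and a check in normal coordinates, and then contract with the metric using $\nabla_\lambda g^{\mu\nu}=0$. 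Your route is shorter and manifestly covariant, at the cost of importing the Palatini identity and the normal-coordinates argument, which the paper avoids by staying entirely within the elementary identities it has already listed; your closing observation that $\delta\Gamma^\lambda_{\mu\nu}=\tfrac12 g^{\lambda\rho}(\nabla_\mu\delta g_{\nu\rho}+\nabla_\nu\delta g_{\mu\rho}-\nabla_\rho\delta g_{\mu\nu})$ also makes the vanishing of $W^\lambda$ on $\partial M$ more explicit than the paper's one-line assertion. No gaps.
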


\begin{proof}

Let $W^{\nu}= -g^{\mu\alpha}\delta\Gamma_{\mu\alpha}^{\nu}+
g^{\mu\nu}\delta\Gamma_{\mu\alpha}^{\alpha}$. Then it follows
\begin{equation}
\frac{1}{\sqrt{|g|}}\frac{\partial}{\partial
x^{\nu}}(\sqrt{|g|}W^{\nu})= \frac{\partial W^{\nu}}{\partial
x^{\nu}}+ W^{\nu}\frac{1}{\sqrt{|g|}}\frac{\partial
\sqrt{|g|}}{\partial x^{\nu}}.
\end{equation}
Using  Lemma \ref{lema:1} we get
\begin{align}
\frac{1}{\sqrt{|g|}}\frac{\partial}{\partial
x^{\nu}}(\sqrt{|g|}W^{\nu})&= - \frac{\partial}{\partial
x^{\nu}}(g^{\mu\alpha}\delta \Gamma_{\mu\alpha}^{\nu})+
\frac{\partial}{\partial x^{\nu}}(g^{\mu\nu}\delta
\Gamma_{\mu\alpha}^{\alpha}) +( -g^{\mu\alpha}\delta\Gamma_{\mu\alpha}^{\nu}+ g^{\mu\nu}\delta\Gamma_{\mu\alpha}^{\alpha})\Gamma_{\nu\beta}^{\beta} \nonumber \\
&= - \frac{\partial g^{\mu\alpha}}{\partial
x^{\nu}}\delta\Gamma_{\mu\alpha}^{\nu}- g^{\mu\alpha}\delta
\frac{\partial \Gamma_{\mu\alpha}^{\nu} }{\partial x^{\nu}}+
\frac{\partial g^{\mu\nu}}{\partial
x^{\nu}}\delta\Gamma_{\mu\alpha}^{\alpha}+ g^{\mu\nu}\delta
\frac{\partial \Gamma_{\mu\alpha}^{\alpha}}{\partial x^{\nu}} +( -g^{\mu\alpha}\delta\Gamma_{\mu\alpha}^{\nu}+
g^{\mu\nu}\delta\Gamma_{\mu\alpha}^{\alpha})\Gamma_{\nu\beta}^{\beta}.
\end{align}
Moreover,  using again Lemma \ref{lema:1} we obtain
\begin{align}
\frac{1}{\sqrt{|g|}}\frac{\partial}{\partial
x^{\nu}}(\sqrt{|g|}W^{\nu})&=g^{\alpha\beta}\Gamma_{\nu\beta}^{\mu}\delta\Gamma_{\mu\alpha}^{\nu}+ g^{\mu\beta}\Gamma_{\nu\beta}^{\alpha}\delta\Gamma_{\mu\alpha}^{\nu} -g^{\beta\nu}\Gamma_{\nu\beta}^{\mu}\delta\Gamma_{\mu\alpha}^{\alpha}-g^{\mu\beta}\Gamma_{\nu\beta}^{\nu}\delta\Gamma_{\mu\alpha}^{\alpha}
-g^{\mu\alpha}\delta \frac{\partial
\Gamma_{\mu\alpha}^{\nu}}{\partial x^{\nu}}+ g^{\mu\nu}\delta
\frac{\partial \Gamma_{\mu\alpha}^{\alpha}}{\partial x^{\nu}}
\nonumber \\
&-g^{\mu\alpha}\Gamma_{\nu\beta}^{\beta}\delta\Gamma_{\mu\alpha}^{\nu}+
g^{\mu\nu}\Gamma_{\nu\beta}^{\beta}\delta\Gamma_{\mu\alpha}^{\alpha}
\nonumber \\
&= g^{\mu\nu}\Big(-\delta \frac{\partial
\Gamma_{\mu\nu}^{\alpha}}{\partial x^{\alpha}} +  \delta
\frac{\partial \Gamma_{\mu\alpha}^{\alpha}}{\partial x^{\nu}}+
\Gamma_{\alpha\mu}^{\beta}\delta\Gamma_{\beta\nu}^{\alpha} +
\Gamma_{\nu\beta}^{\alpha}\delta\Gamma_{\mu\alpha}^{\beta} -\Gamma_{\nu\mu}^{\beta}\delta\Gamma_{\beta\alpha}^{\alpha} -
\Gamma_{\beta\alpha}^{\alpha}\delta\Gamma_{\mu\nu}^{\beta}\Big)
\nonumber \\
&= g^{\mu\nu}\delta \Big(- \frac{\partial
\Gamma_{\mu\nu}^{\alpha}}{\partial x^{\alpha}} + \frac{\partial
\Gamma_{\mu\alpha}^{\alpha}}{\partial x^{\nu}}+
\Gamma_{\alpha\mu}^{\beta}\Gamma_{\beta\nu}^{\alpha} -
\Gamma_{\beta\alpha}^{\alpha}\Gamma_{\mu\nu}^{\beta}\Big)=
g^{\mu\nu}\delta R_{\mu\nu}.
\end{align}
Finally, we have
\begin{align}
g^{\mu\nu}\delta R_{\mu\nu}
&=\frac{1}{\sqrt{|g|}}\frac{\partial}{\partial
x^{\nu}}(\sqrt{|g|}W^{\nu})\;\mbox{ and }
\int_M g^{\mu\nu}\delta R_{\mu\nu} \sqrt{|g|}\dx = \int_M
\frac{\partial}{\partial x^{\nu}}(\sqrt{|g|}W^{\nu})\dx.
\end{align}
Using the Gauss-Stokes theorem one obtains
\begin{equation}
\int_M \frac{\partial}{\partial x^{\nu}}(\sqrt{|g|}W^{\nu})\dx =
\int_{\partial M}W^{\nu}d\sigma_{\nu}.
\end{equation}
Since $ \delta g_{\mu\nu}=0 $ and $ \delta  \partial_\lambda
g_{\mu\nu}=0 $ at the boundary $\partial M$ we have $W^{\nu}|_{\partial M} = 0$. Then we have
$\int_{\partial M}W^{\nu}d\sigma_{\nu}=0,$ that completes the proof.
\end{proof}

\begin{lema} \label{lema:3}
The variation of $S_0$ is
\begin{equation}\label{varS0}
\delta S_0 =\int_M G_{\mu\nu} \sqrt{|g|} \delta g^{\mu \nu} \; \dx +\Lambda \int_M g_{\mu \nu} \sqrt{|g|} \delta g^{\mu \nu} \; \dx,
\end{equation}
where $ G_{\mu\nu}= R_{\mu\nu}- \frac{1}{2}R g_{\mu\nu} $ is the Einstein tensor.
\end{lema}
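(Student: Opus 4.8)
The plan is to vary the integrand of $S_0$ directly by the Leibniz rule, treating $\delta g^{\mu\nu}$ as the independent variation and collecting every term into the coefficient of $\delta g^{\mu\nu}$ under the integral sign. Writing $S_0 = \int_M (R-2\Lambda)\sqrt{|g|}\,\dx$ and using that $\Lambda$ is constant, the product rule gives $\delta S_0 = \int_M \big(\sqrt{|g|}\,\delta R + (R-2\Lambda)\,\delta\sqrt{|g|}\big)\dx$. The two pieces are then handled separately with the identities already assembled in Lemmas \ref{lema:1} and \ref{lema:2}.

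First I would treat the curvature term. Since $R = g^{\mu\nu}R_{\mu\nu}$, I expand $\delta R = R_{\mu\nu}\,\delta g^{\mu\nu} + g^{\mu\nu}\,\delta R_{\mu\nu}$. Multiplying by $\sqrt{|g|}$ and integrating, the second summand contributes exactly $\int_M g^{\mu\nu}\delta R_{\mu\nu}\sqrt{|g|}\,\dx$, which vanishes by Lemma \ref{lema:2}. Hence $\int_M \sqrt{|g|}\,\delta R\,\dx = \int_M R_{\mu\nu}\sqrt{|g|}\,\delta g^{\mu\nu}\,\dx$. Equivalently, one may insert the explicit expression $\delta R = R_{\mu\nu}\delta g^{\mu\nu} + g_{\mu\nu}\Box\delta g^{\mu\nu} - \nmu\nnu\delta g^{\mu\nu}$ from Lemma \ref{lema:1} and observe that the last two terms are total divergences whose integrals vanish under the boundary conditions $\delta g_{\mu\nu}|_{\partial M}=0$, $\delta\partial_\lambda g_{\mu\nu}|_{\partial M}=0$ — which is precisely what Lemma \ref{lema:2} records.

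Next I would handle the volume term using $\delta\sqrt{|g|} = -\frac12 g_{\mu\nu}\sqrt{|g|}\,\delta g^{\mu\nu}$ from Lemma \ref{lema:1}, so that $(R-2\Lambda)\,\delta\sqrt{|g|} = -\frac12 (R-2\Lambda)\,g_{\mu\nu}\sqrt{|g|}\,\delta g^{\mu\nu}$. Adding the two contributions, the coefficient of $\sqrt{|g|}\,\delta g^{\mu\nu}$ becomes $R_{\mu\nu} - \frac12 R g_{\mu\nu} + \Lambda g_{\mu\nu}$. Recognizing $R_{\mu\nu} - \frac12 R g_{\mu\nu} = G_{\mu\nu}$ and separating off the cosmological piece $\Lambda g_{\mu\nu}$ yields exactly \eqref{varS0}.

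The only genuinely nontrivial step is the vanishing of $\int_M g^{\mu\nu}\delta R_{\mu\nu}\sqrt{|g|}\,\dx$, i.e. the justification that the Palatini-type boundary term drops out; but this has already been established in Lemma \ref{lema:2}, so the remainder is pure bookkeeping. I therefore expect no further obstacle: everything else is a direct application of the algebraic identities of Lemma \ref{lema:1} together with the constancy of $\Lambda$.
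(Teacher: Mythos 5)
Your proposal is correct and follows essentially the same route as the paper's own proof: expand $\delta\big((R-2\Lambda)\sqrt{|g|}\big)$ by the product rule, use $\delta\sqrt{|g|}=-\tfrac12 g_{\mu\nu}\sqrt{|g|}\,\delta g^{\mu\nu}$ from Lemma \ref{lema:1}, split $\delta(g^{\mu\nu}R_{\mu\nu})$, and discard $\int_M g^{\mu\nu}\delta R_{\mu\nu}\sqrt{|g|}\,\dx$ via Lemma \ref{lema:2}. No gaps; the bookkeeping and the identification of $G_{\mu\nu}+\Lambda g_{\mu\nu}$ match the paper exactly.
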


\begin{proof}
The variation of $S_0$ can be found as follows
\begin{align}
\delta S_0 &= \int_M \delta ((R-2\Lambda) \sqrt{|g|}) \; \dx
=\int_M \delta (R \sqrt{|g|}) \; \dx - 2\Lambda \int_M \delta \sqrt{|g|} \; \dx \nonumber\\
&=\int_M \left(\sqrt{|g|} \delta R + R \delta \sqrt{|g|} +\Lambda g_{\mu \nu} \sqrt{|g|} \delta g^{\mu \nu} \right) \; \dx \nonumber\\
&=\int_M\left(\sqrt{|g|} \delta (g^{\mu\nu} R_{\mu \nu}) - \frac{1}{2} R \sqrt{|g|} g_{\mu \nu} \delta g^{\mu \nu} + \Lambda g_{\mu \nu} \sqrt{|g|} \delta g^{\mu \nu}\right) \; \dx  \nonumber\\
&=\int_M G_{\mu\nu} \sqrt{|g|} \delta g^{\mu \nu} \; \dx + \Lambda \int_M g_{\mu \nu} \sqrt{|g|} \delta
g^{\mu \nu} \; \dx +\int_M g^{\mu \nu} \delta R_{\mu\nu}\sqrt{|g|} \;\dx.
\label{v-S0}
\end{align}
Using Lemma \ref{lema:2}, from the last equation we obtain the variation of $S_0$.
\end{proof}

\begin{lema}
For any scalar function $h$ we have
\begin{align}
\int_M h \delta R \sqrt{|g|} \; \dx = \int_M \left(h R_{\mu\nu} +
g_{\mu\nu} \Box h -  \nmu \nnu h \right) \delta g^{\mu\nu}
\sqrt{|g|} \; \dx. \label{h:delta R}
\end{align}
\end{lema}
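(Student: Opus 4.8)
The plan is to begin from the pointwise variation of the scalar curvature recorded in Lemma~\ref{lema:1}, namely $\delta R = R_{\mu\nu}\delta g^{\mu\nu} + g_{\mu\nu}\Box\,\delta g^{\mu\nu} - \nmu\nnu\,\delta g^{\mu\nu}$, multiply by $h\sqrt{|g|}$, and integrate over $M$. This splits the left-hand side into three integrals. The first, $\int_M h R_{\mu\nu}\,\delta g^{\mu\nu}\sqrt{|g|}\,\dx$, already has the desired form and contributes the $h R_{\mu\nu}$ term directly; the remaining two carry derivatives on $\delta g^{\mu\nu}$, and the whole content of the lemma is to transfer those derivatives onto the scalar $h$ by integration by parts.

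For the box term I would exploit metric compatibility $\nabla_\lambda g_{\mu\nu}=0$ to move the metric through the operator, $g_{\mu\nu}\Box\,\delta g^{\mu\nu}=\Box(g_{\mu\nu}\delta g^{\mu\nu})$, so the integrand is $h\,\Box\psi$ with $\psi:=g_{\mu\nu}\delta g^{\mu\nu}$ a scalar. Because $\Box=\nabla^{\mu}\nmu$ is formally self-adjoint on scalars---$h\,\Box\psi-(\Box h)\psi$ being a total divergence---two integrations by parts give $\int_M h\,\Box\psi\,\sqrt{|g|}\,\dx=\int_M(\Box h)\psi\,\sqrt{|g|}\,\dx$, which restores the metric and produces $\int_M g_{\mu\nu}(\Box h)\,\delta g^{\mu\nu}\sqrt{|g|}\,\dx$. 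For the last term, $-\int_M h\,\nmu\nnu\,\delta g^{\mu\nu}\sqrt{|g|}\,\dx$, I would integrate by parts twice against the symmetric tensor $\delta g^{\mu\nu}$, peeling off $\nmu$ and then $\nnu$ onto $h$; since second covariant derivatives of a scalar commute, $\nmu\nnu h=\nnu\nmu h$, this yields $-\int_M(\nmu\nnu h)\,\delta g^{\mu\nu}\sqrt{|g|}\,\dx$. Summing the three contributions gives exactly the asserted identity.

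The main obstacle is not the algebra but the justification that every boundary term produced along the way vanishes. Each integration by parts is really an application of the Gauss--Stokes theorem (as in the proof of Lemma~\ref{lema:2}) to a total covariant divergence $\nabla_\sigma(\,\cdots\,)^{\sigma}$, and the resulting surface integrals over $\partial M$ involve only $\delta g^{\mu\nu}$ and its first covariant derivative $\nabla_\lambda\delta g^{\mu\nu}$, the metric factors contributing nothing since $\nabla g=0$. Using $\delta g^{\mu\nu}=-g^{\mu\alpha}g^{\nu\beta}\delta g_{\alpha\beta}$ together with the boundary conditions $\delta g_{\mu\nu}|_{\partial M}=0$ and $\delta\partial_\lambda g_{\mu\nu}|_{\partial M}=0$ stated before Lemma~\ref{lema:1}, both $\delta g^{\mu\nu}$ and $\nabla_\lambda\delta g^{\mu\nu}$ vanish on $\partial M$, so all boundary contributions drop out and the interior identity above is exact. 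The care needed is to confirm that the integrands really assemble into genuine total divergences before invoking Gauss--Stokes.
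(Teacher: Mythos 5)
Your proposal is correct and follows essentially the same route as the paper: expand $\delta R$ via Lemma~\ref{lema:1}, keep the $hR_{\mu\nu}$ term, and move the derivatives off $\delta g^{\mu\nu}$ onto $h$ by two integrations by parts for each of the $\Box$ and $\nmu\nnu$ terms, with all boundary contributions killed by the stated boundary conditions. The only cosmetic differences are that the paper keeps $hg_{\mu\nu}$ together instead of forming the scalar $g_{\mu\nu}\delta g^{\mu\nu}$, and packages the two integrations by parts for the Hessian term as the single divergence of the vector $N^\mu = h\nnu\delta g^{\mu\nu} - \nnu h\,\delta g^{\mu\nu}$.
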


\begin{proof}
Using Lemma \ref{lema:1}, for any scalar function
$h$ we have
\begin{align} \label{13.03:0}
&\int_M h \delta R \sqrt{|g|} \; \dx  = \int_M \left( h R_{\mu\nu} \delta g^{\mu\nu} + h g_{\mu\nu} \Box
\delta g^{\mu\nu} - h \nmu \nnu \delta g^{\mu\nu} \right)
\sqrt{|g|} \; \dx.
\end{align}
The second and third term in this formula can be transformed in
the following way:
\begin{align}
\int_M h g_{\mu\nu} (\Box \delta g^{\mu\nu}) \sqrt{|g|} \; \dx &= \int_M g_{\mu\nu} (\Box h)  \delta g^{\mu\nu} \sqrt{|g|}\; \dx, \\
\int_M h \nmu \nnu \delta g^{\mu\nu} \sqrt{|g|} \; \dx & = \int_M \nmu
\nnu h \; \delta g^{\mu\nu} \sqrt{|g|} \; \dx.
\end{align}

To prove the first of these equations we use the Stokes theorem
and obtain
\begin{align}
\int_M h g_{\mu\nu} \Box \delta g^{\mu\nu} \sqrt{|g|} \; \dx &= \int_M h g_{\mu\nu} \nabla_{\alpha}\nabla^{\alpha} \delta g^{\mu\nu} \sqrt{|g|} \; \dx
=-\int_M \nabla_{\alpha}(h g_{\mu\nu})\nabla^{\alpha}\delta
g^{\mu\nu} \sqrt{|g|} \; \dx \nonumber \\
&= \int_M g_{\mu\nu}\nabla^{\alpha}\nabla_{\alpha}h \; \delta g^{\mu\nu} \sqrt{|g|}\; \dx = \int_M g_{\mu\nu} \Box h \; \delta g^{\mu\nu}\sqrt{|g|} \; \dx.
\end{align}
Here we have used $ \nabla_{\gamma}g_{\mu\nu}=0$ and $
\nabla^{\alpha}\nabla_{\alpha}=\nabla_{\alpha} \nabla^{\alpha}=
\Box$ to obtain the last integral.

To obtain the second equation we first introduce  vector 
\begin{equation}
N^\mu = h \nnu \delta g^{\mu\nu} - \nnu h \delta g^{\mu\nu}.
\end{equation}

From the above expression follows
\begin{align}
\nmu N^{\mu} & = \nmu (h \nnu \delta g^{\mu\nu} - \nnu h \delta g^{\mu\nu}) = \nmu h \nnu \delta g^{\mu\nu} + h \nmu \nnu \delta g^{\mu\nu} - \nmu \nnu h \; \delta g^{\mu\nu} - \nnu h \nmu \delta g^{\mu\nu} \nonumber\\
& = h \nmu \nnu \delta g^{\mu\nu} - \nmu \nnu h \; \delta
g^{\mu\nu}.
\end{align}

Integrating $\nmu N^{\mu}$ yields $\int_M \nmu N^{\mu} \sqrt{|g|} \; \dx = \int_{\partial M} N^{\mu}n_{\mu} \mathrm d\partial M $, where $n_\mu$ is the unit normal vector. Since $N^\mu|_{\partial M} = 0$ we have that the last
integral is zero, which completes the proof.
\end{proof}

\begin{lema}
Let $\theta$ and $\psi$ be scalar functions such that
$\delta \psi |_{\partial M} = 0$. Then one has
\begin{align} \label{theta:psi}
\int_M \theta \delta \Box \psi \sqrt{|g|} \; \dx & = \frac{1}{2} \int_M
g^{\alpha \beta} \partial_{\alpha} \theta \;
\partial_{\beta} \psi g_{\mu\nu} \delta g^{\mu\nu} \; \sqrt{|g|} \;
\dx  - \int_M \pmu \theta \; \pnu \psi \delta g^{\mu \nu} \sqrt{|g|} \;
\dx \nonumber \\
& + \int_M \Box \theta \; \delta \psi \sqrt{|g|} \; \dx + \frac{1}{2} \int_M g_{\mu\nu}\theta \Box \psi \delta g^{\mu\nu}
\sqrt{|g|}\; \dx.
\end{align}
\end{lema}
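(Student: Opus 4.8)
The plan is to substitute the coordinate expression $\Box\psi=\frac{1}{\sqrt{|g|}}\pmu(\sqrt{|g|}\,g^{\mu\nu}\pnu\psi)$ and vary it using only the elementary identities of Lemma~\ref{lema:1}. Setting $A^{\mu}=\sqrt{|g|}\,g^{\mu\nu}\pnu\psi$, so that $\Box\psi=\frac{1}{\sqrt{|g|}}\pmu A^{\mu}$, I would split $\delta\Box\psi$ into a contribution from varying the prefactor $1/\sqrt{|g|}$ and a contribution from $\delta A^{\mu}$. Using $\delta\sqrt{|g|}=-\frac12 g_{\mu\nu}\sqrt{|g|}\,\delta g^{\mu\nu}$, the prefactor piece is $\delta\!\big(\tfrac{1}{\sqrt{|g|}}\big)\pmu A^{\mu}=\frac12 g_{\mu\nu}(\delta g^{\mu\nu})\,\Box\psi$; multiplying by $\theta\sqrt{|g|}$ and integrating produces at once the last term $\frac12\int_M g_{\mu\nu}\theta\,\Box\psi\,\delta g^{\mu\nu}\sqrt{|g|}\,\dx$ on the right-hand side.

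For the remaining part I would write $\int_M\theta\,\frac{1}{\sqrt{|g|}}\pmu(\delta A^{\mu})\sqrt{|g|}\,\dx=\int_M\theta\,\pmu(\delta A^{\mu})\,\dx$ and apply the Gauss--Stokes theorem to move $\pmu$ off $\delta A^{\mu}$ onto $\theta$. Expanding
\begin{equation}
\delta A^{\mu}=(\delta\sqrt{|g|})\,g^{\mu\nu}\pnu\psi+\sqrt{|g|}\,(\delta g^{\mu\nu})\pnu\psi+\sqrt{|g|}\,g^{\mu\nu}\pnu(\delta\psi),
\end{equation}
the first summand, after inserting $\delta\sqrt{|g|}$, yields $\frac12\int_M g^{\alpha\beta}\partial_{\alpha}\theta\,\partial_{\beta}\psi\,g_{\mu\nu}\delta g^{\mu\nu}\sqrt{|g|}\,\dx$, and the second yields $-\int_M\pmu\theta\,\pnu\psi\,\delta g^{\mu\nu}\sqrt{|g|}\,\dx$. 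These are already the first and second terms of the claimed identity in final form.

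The third summand $-\int_M\pmu\theta\,\sqrt{|g|}\,g^{\mu\nu}\pnu(\delta\psi)\,\dx$ still carries a derivative on $\delta\psi$, so I would integrate by parts once more, transferring $\pnu$ from $\delta\psi$ onto the vector $\sqrt{|g|}\,g^{\nu\mu}\pmu\theta$. Since $\frac{1}{\sqrt{|g|}}\pnu(\sqrt{|g|}\,g^{\nu\mu}\pmu\theta)=\Box\theta$, this bulk term becomes $\int_M\Box\theta\,\delta\psi\,\sqrt{|g|}\,\dx$, the third term, completing the match with the stated formula.

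I expect the surface integrals generated by these two integrations by parts to be the main obstacle. Those arising from the $\delta\sqrt{|g|}$ and $\delta g^{\mu\nu}$ summands vanish because $\delta g^{\mu\nu}|_{\partial M}=0$ (hence $\delta\sqrt{|g|}|_{\partial M}=0$), exactly as for the vector $W^{\nu}$ in Lemma~\ref{lema:2}. The delicate one comes from the $\pnu(\delta\psi)$ summand: the step in which $\partial_{\mu}$ is moved onto $\theta$ leaves a boundary integral of the form $\int_{\partial M}\theta\,n^{\mu}\pmu(\delta\psi)\sqrt{|g|}\,\mathrm{d}\partial M$, carrying the \emph{normal derivative} of $\delta\psi$ rather than $\delta\psi$ itself, while the final integration by parts contributes only a harmless surface term proportional to $\delta\psi$ that vanishes by the hypothesis $\delta\psi|_{\partial M}=0$. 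Thus the care needed is in arguing that the surviving surface term is discarded under the standing boundary conditions on $M$ (in the same spirit as $W^{\nu}|_{\partial M}=0$ in Lemma~\ref{lema:2}, where vanishing of $\delta g$ together with its derivatives on $\partial M$ was used); controlling this term is the genuinely nontrivial point of the proof.
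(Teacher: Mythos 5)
Your proposal is correct and follows essentially the same route as the paper's own proof: the same coordinate expression $\Box\psi=\frac{1}{\sqrt{|g|}}\partial_{\alpha}(\sqrt{|g|}\,g^{\alpha\beta}\partial_{\beta}\psi)$, the same split into the variation of the prefactor $1/\sqrt{|g|}$ (giving the last term) plus the variation of the divergence, the same Gauss--Stokes step moving $\partial_{\alpha}$ onto $\theta$, the same three-term expansion of $\delta(\sqrt{|g|}\,g^{\alpha\beta}\partial_{\beta}\psi)$, and the same final integration by parts producing $\int_M\Box\theta\,\delta\psi\sqrt{|g|}\,\dx$. The surface term you single out as delicate (the one carrying $\partial_{\beta}\delta\psi$ rather than $\delta\psi$) is exactly the one the paper dismisses with ``it is easy to see that $\int_M\partial_{\alpha}(\theta\,\delta(\sqrt{|g|}\,g^{\alpha\beta}\partial_{\beta}\psi))\,\dx=0$,'' so your extra caution identifies a point the paper glosses over rather than a divergence in method.
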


\begin{proof}
Since $\theta$ and $\psi$ are scalar functions such that
$\delta \psi |_{\partial M} = 0$ we have
\begin{align}
& \int_M \theta \delta \Box \psi \sqrt{|g|} \; \dx = \int_M \theta
\partial_{\alpha} \delta (\sqrt{|g|} g^{\alpha\beta} \partial_{\beta} \psi)\; \dx + \int_M \theta \delta \left(\frac{1}{\sqrt{|g|}} \right)
\partial_{\alpha} (\sqrt{|g|} g^{\alpha\beta} \partial_{\beta} \psi) \sqrt{|g|}\; \dx \nonumber\\
&= \int_M  \partial_{\alpha} (\theta \delta (\sqrt{|g|}
g^{\alpha\beta} \partial_{\beta} \psi))\; \dx - \int_M
\partial_{\alpha} \theta \; \delta (\sqrt{|g|} g^{\alpha\beta}
\partial_{\beta} \psi)\; \dx + \frac{1}{2} \int_M \theta g_{\mu\nu} \Box \psi \delta g^{\mu\nu}
\sqrt{|g|}\; \dx.
\end{align}
It is easy to see that $ \int_M  \partial_{\alpha} (\theta \delta
(\sqrt{|g|} g^{\alpha\beta} \partial_{\beta} \psi))\; \dx = 0$.
From this result it follows
\begin{align}
 \int_M \theta \delta \Box \psi \sqrt{|g|} \; \dx &= -\int_M g^{\alpha \beta} \partial_{\alpha} \theta \; \partial_{\beta} \psi \delta (\sqrt{|g|}) \; \dx
- \int_M \partial_{\alpha} \theta \;\partial_{\beta} \psi \delta g^{\alpha \beta} \sqrt{|g|} \; \dx \nonumber\\
&- \int_M g^{\alpha \beta} \sqrt{|g|} \partial_{\alpha} \theta \;
\partial_{\beta} \delta \psi \; \dx + \frac{1}{2} \int_M \theta g_{\mu\nu} \Box \psi \delta g^{\mu\nu} \sqrt{|g|}\; \dx \nonumber\\
&= \frac{1}{2} \int_M g^{\alpha \beta} \partial_{\alpha} \theta \;
\partial_{\beta} \psi g_{\mu\nu} \delta g^{\mu\nu} \; \sqrt{|g|} \; \dx
- \int_M \pmu \theta \; \pnu \psi \delta g^{\mu
\nu} \sqrt{|g|} \; \dx \nonumber\\
&- \int_M \partial_{\beta} (g^{\alpha \beta} \sqrt{|g|}
\partial_{\alpha} \theta \; \delta \psi) \; \dx  + \int_M \partial_{\beta} (g^{\alpha \beta} \sqrt{|g|} \partial_{\alpha} \theta) \;
 \delta \psi \; \dx \nonumber\\
&+ \frac{1}{2} \int_M g_{\mu\nu}\theta \Box \psi \delta g^{\mu\nu} \sqrt{|g|}\; \dx \nonumber\\
&= \frac{1}{2} \int_M g^{\alpha \beta} \partial_{\alpha} \theta \;
\partial_{\beta} \psi g_{\mu\nu} \delta g^{\mu\nu} \; \sqrt{|g|} \;
\dx - \int_M \pmu \theta \; \pnu \psi \delta
g^{\mu \nu} \sqrt{|g|} \; \dx \nonumber\\
&+ \int_M \Box \theta \; \delta \psi \sqrt{|g|} \; \dx + \frac{1}{2}
\int_M g_{\mu\nu}\theta \Box \psi \delta g^{\mu\nu} \sqrt{|g|}\; \dx.
\end{align}
At the end we have that
\begin{align}
&\int_M \theta \delta \Box \psi \sqrt{|g|} \; \dx = \frac{1}{2} \int_M
g^{\alpha \beta} \partial_{\alpha} \theta \;
\partial_{\beta} \psi g_{\mu\nu} \delta g^{\mu\nu} \; \sqrt{|g|} \;
\dx \nonumber \\
&- \int_M \pmu \theta \; \pnu \psi \delta g^{\mu \nu} \sqrt{|g|} \;
\dx + \int_M \Box \theta \; \delta \psi \sqrt{|g|} \; \dx + \frac{1}{2} \int_M g_{\mu\nu}\theta \Box \psi \delta g^{\mu\nu}
\sqrt{|g|}\; \dx.
\end{align}
\end{proof}

Now, after this preliminary work we can get the variation of
$S_1$.\\

\begin{lema} \label{vars1}
The variation of $S_1$ is
\begin{align*}
\delta S_1 &= -\frac{1}{2} \int_M g_{\mu\nu} \HH(R)\FF(\Box)\GG(R) \delta g^{\mu\nu} \sqrt{|g|} \; \dx + \int_M \left(R_{\mu\nu} \Phi - K_{\mu\nu} \Phi \right) \delta g^{\mu\nu} \sqrt{|g|} \; \dx \nonumber\\
&+ \frac{1}{2} \sum_{n=1}^{\infty} f_n \sum_{l=0}^{n-1} \int_M \Big(g_{\mu\nu} \left(\partial^\alpha \Box^l \HH(R) \partial_\alpha \Box^{n-1-l} \GG(R) + \Box^l \HH(R) \Box^{n-l} \GG(R)\right) \\
&-2 \pmu \Box^l \HH(R) \pnu \Box^{n-1-l} \GG(R)\Big) \delta g^{\mu\nu} \sqrt{|g|} \; \dx,
\end{align*}
where $K_{\mu\nu} = \nmu\nnu - g_{\mu\nu} \Box$, $ \, \, \Phi = \HH'(R)\FF(\Box)\GG(R) + \GG'(R)\FF(\Box)\HH(R)$ and  $'$ denotes derivative with respect to $R$.

\end{lema}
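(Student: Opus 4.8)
The plan is to use the linearity of $\FF(\Box)=\sum_{n}f_n\Box^n$ and reduce the whole computation to the three preceding lemmas. First I would write $S_1=\sum_{n=0}^{\infty}f_n\int_M \HH(R)\,\Box^n\GG(R)\sqrt{|g|}\,\dx$ and vary term by term, organizing the result into the three independent sources of variation: the volume element $\sqrt{|g|}$, the scalar fields $\HH(R)$ and $\GG(R)$ through $\delta R$, and the $n$ metric-dependent d'Alembertians sitting inside $\Box^n$. Resumming in $n$ at the end restores $\FF(\Box)$.

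The $\delta\sqrt{|g|}$ contribution is immediate from Lemma \ref{lema:1}: since $\delta\sqrt{|g|}=-\tfrac12 g_{\mu\nu}\sqrt{|g|}\,\delta g^{\mu\nu}$, resumming gives the term $-\tfrac12\int_M g_{\mu\nu}\HH(R)\FF(\Box)\GG(R)\,\delta g^{\mu\nu}\sqrt{|g|}\,\dx$. For the $\delta\HH(R)=\HH'(R)\delta R$ piece I would set $h=\HH'(R)\Box^n\GG(R)$ and apply the $h\,\delta R$ lemma; for the $\delta\GG(R)=\GG'(R)\delta R$ piece I would first move $\Box^n$ onto $\HH(R)$ using self-adjointness of $\Box$ under the invariant integral (repeated Stokes, with boundary terms killed by $\delta g_{\mu\nu}|_{\partial M}=\delta\partial_\lambda g_{\mu\nu}|_{\partial M}=0$), turning the integrand into $\GG'(R)\,\Box^n\HH(R)\,\delta R$, and again invoke the $h\,\delta R$ lemma with $h=\GG'(R)\Box^n\HH(R)$. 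Summing both over $n$ produces exactly $\HH'\FF(\Box)\GG+\GG'\FF(\Box)\HH=\Phi$, and since $g_{\mu\nu}\Box h-\nmu\nnu h=-K_{\mu\nu}h$, these combine into $\int_M(R_{\mu\nu}\Phi-K_{\mu\nu}\Phi)\,\delta g^{\mu\nu}\sqrt{|g|}\,\dx$.

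The crux is the variation of the operator $\Box^n$ itself. I would expand it by the operator Leibniz rule
\[
\delta(\Box^n u)=\sum_{l=0}^{n-1}\Box^l(\delta\Box)\Box^{n-1-l}u+\Box^n(\delta u),
\]
where $(\delta\Box)$ denotes the variation of the d'Alembertian acting on a frozen argument. The last term is precisely the $\delta\GG$ contribution already treated above. In each remaining summand I would shift $\Box^l$ off $\HH(R)$ by self-adjointness, so that with $\theta=\Box^l\HH(R)$ and $\psi=\Box^{n-1-l}\GG(R)$ the integral becomes $\int_M\theta\,(\delta\Box)\psi\sqrt{|g|}\,\dx$, which is exactly the object evaluated in the $\theta$--$\psi$ lemma. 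Here $\psi$ is the \emph{unperturbed} field, since its own variation has been accounted for separately in the $\Box^n(\delta u)$ term, so I would set $\delta\psi=0$; this kills the $\int_M\Box\theta\,\delta\psi\sqrt{|g|}\,\dx$ term, and the three surviving terms, with $\Box\psi=\Box^{n-l}\GG(R)$, reproduce the claimed double sum after factoring the overall $\tfrac12$.

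I expect the main obstacle to be bookkeeping rather than analysis: one must justify carefully that in the Leibniz expansion the argument to the right of $(\delta\Box)$ is held fixed (so that $\delta\psi=0$ in the $\theta$--$\psi$ lemma), avoiding a double count of the $\delta\GG$ variation, and must track the index shift $\Box^{n-1-l}\to\Box^{n-l}$ that turns $\theta\,\Box\psi$ into $\Box^l\HH(R)\,\Box^{n-l}\GG(R)$. The analytic inputs, namely self-adjointness of each $\Box$ and the vanishing of all boundary integrals under the stated conditions on $\delta g_{\mu\nu}$, are routine consequences of the earlier lemmas.
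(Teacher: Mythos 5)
Your proposal is correct and follows essentially the same route as the paper: the same three-way split of $\delta S_1$ into the $\delta\sqrt{|g|}$ contribution, the $\delta R$ contributions handled by \eqref{h:delta R} (with $h=\HH'(R)\FF(\Box)\GG(R)$ and $h=\GG'(R)\Box^n\HH(R)$, summing to $\Phi$), and the operator variations handled by \eqref{theta:psi}, producing the double sum. The only difference is presentational: the paper iterates \eqref{theta:psi} so that the term $\int_M \Box\theta\,\delta\psi\sqrt{|g|}\,\dx$ carries the recursion through all $n$ d'Alembertians, whereas you first expand $\delta(\Box^n\GG(R))$ by an explicit operator Leibniz rule and then apply \eqref{theta:psi} with $\delta\psi=0$ --- the same telescoping in different clothing.
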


\begin{proof}
The variation of $S_1$ can be expressed as
\begin{align}
\delta S_1 &= \int_M \left(\HH(R) \FF(\Box)\GG(R) \delta (\sqrt{|g|}) + \delta (\HH(R)) \FF(\Box)\GG(R) \sqrt{|g|} +  \HH(R) \delta(\FF(\Box)\GG(R)) \sqrt{|g|}\right) \dx.
\end{align}

For the first two integrals in the last equation we have
\begin{align}
I_1 &= \int_M \HH(R) \FF(\Box)\GG(R) \delta (\sqrt{|g|}) \; \dx = -\frac{1}{2} \int_M g_{\mu\nu} \HH(R)\FF(\Box)\GG(R) \delta g^{\mu\nu} \sqrt{|g|} \; \dx, \\
I_2 &= \int_M \delta (\HH(R)) \FF(\Box)\GG(R) \sqrt{|g|} \; \dx = \int_M \HH'(R) \delta R \; \FF(\Box)\GG(R)\sqrt{|g|} \; \dx. \nonumber
\end{align}
Substituting $ h= \HH'(R) \; \FF(\Box)\GG(R)$ in  equation
\eqref{h:delta R} we obtain
\begin{align}
I_2 =& \int_M \Big( R_{\mu\nu} \HH'(R) \FF(\Box)\GG(R) - K_{\mu\nu}
\big(\HH'(R) \FF(\Box)\GG(R)\big)\Big) \delta g^{\mu\nu} \sqrt{|g|}
\; \dx.
\end{align}

The third integral can be presented using linear combination of the
following integrals
\begin{align}
J_n = \int_M \HH(R) \delta(\Box^n \GG(R)) \sqrt{|g|} \; \dx.
\end{align}

$J_0$ is the integral of the same form as $I_2$ so
\begin{align}
 J_0 = \int_M \Big( R_{\mu\nu} \GG'(R) \HH(R) - K_{\mu\nu} \big(\GG'(R) \HH(R)\big)\Big) \delta g^{\mu\nu} \sqrt{|g|} \;
 \dx.
\end{align}

For $n>0$, we can find $J_n$ using \eqref{theta:psi}.
In the first step we take $ \theta =\HH(R) $ and $ \psi=
\Box^{n-1} \GG(R)$ and obtain
\begin{align}
J_{n} &= \frac{1}{2} \int_M g^{\alpha \beta} \partial_{\alpha} \HH(R) \;
\partial_{\beta} \Box^{n-1} \GG(R) g_{\mu\nu} \delta g^{\mu\nu} \; \sqrt{|g|} \;
\dx - \int_M \pmu \HH(R) \; \pnu \Box^{n-1} \GG(R) \delta g^{\mu \nu}
\sqrt{|g|} \; \dx \nonumber  \\ &+ \int_M \Box \HH(R) \; \delta \Box^{n-1} \GG(R)
\sqrt{|g|} \; \dx + \frac{1}{2} \int_M g_{\mu\nu}\HH(R) \Box^{n} \GG(R) \delta g^{\mu\nu}
\sqrt{|g|}\; \dx.
\end{align}
In the second step we take $ \theta =\Box \HH(R) $ and $ \psi=
\Box^{n-2} \GG(R)$ and get the third integral in this formula,
etc. Using \eqref{theta:psi} $n$ times one obtains
\begin{align}
J_n &= \frac{1}{2} \sum_{l=0}^{n-1} \int_M \left( g_{\mu\nu} \partial^\alpha \Box^l \HH(R) \partial_\alpha \Box^{n-1-l} \GG(R) +g_{\mu\nu} \Box^l \HH(R) \Box^{n-l} \GG(R) -2 \pmu \Box^l \HH(R) \pnu \Box^{n-1-l} \GG(R) \right)\delta g^{\mu\nu} \sqrt{|g|} \; \dx \nonumber \\
&+ \int_M \Big( R_{\mu\nu} \GG'(R) \Box^n \HH(R) - K_{\mu\nu}
\big(\GG'(R) \Box^n \HH(R)\big)\Big) \delta g^{\mu\nu} \sqrt{|g|}
\; \dx.
\end{align}
Using the equation \eqref{h:delta R} we obtain the last integral in the above formula.
Finally, one can put everything together and obtain
\begin{align}
\delta S_1 &= I_1 + I_2 + \sum_{n=0}^{\infty} f_n J_n = -\frac{1}{2} \int_M g_{\mu\nu} \HH(R)\FF(\Box)\GG(R) \delta g^{\mu\nu} \sqrt{|g|} \; \dx + \int_M \left(R_{\mu\nu} \Phi - K_{\mu\nu} \Phi \right) \delta g^{\mu\nu} \sqrt{|g|} \; \dx \nonumber\\
&+ \frac{1}{2} \sum_{n=1}^{\infty} f_n \sum_{l=0}^{n-1} \int_M \Big(g_{\mu\nu} \left(\partial^\alpha \Box^l \HH(R) \partial_\alpha \Box^{n-1-l} \GG(R) + \Box^l \HH(R) \Box^{n-l} \GG(R)\right) \nonumber\\
&-2 \pmu \Box^l \HH(R) \pnu \Box^{n-1-l} \GG(R)\Big) \delta g^{\mu\nu} \sqrt{|g|} \; \dx. &\qedhere
\end{align}
\end{proof}


\begin{theo} \label{t1}
The variation of the functional \eqref{lag:1} is equal to zero iff
\begin{align} \label{EOM}
&\frac{G_{\mu\nu} +\Lambda g_{\mu\nu}}{16 \pi G} -\frac{1}{2} g_{\mu\nu} \HH(R)\FF(\Box)\GG(R) + \left(R_{\mu\nu} \Phi - K_{\mu\nu} \Phi \right) \nonumber\\
&+ \frac{1}{2} \sum_{n=1}^{\infty} f_n \sum_{l=0}^{n-1} \big( g_{\mu\nu}  \partial^\alpha \Box^l \HH(R) \partial_\alpha \Box^{n-1-l} \GG(R) - 2 \pmu \Box^l \HH(R) \pnu \Box^{n-1-l} \GG(R) + g_{\mu\nu}
\Box^l \HH(R) \Box^{n-l} \GG(R) \big) = 0.
\end{align}
\end{theo}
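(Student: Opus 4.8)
The plan is to assemble the full variation $\delta S$ from the two pieces already computed and then invoke the fundamental lemma of the calculus of variations. First I would substitute the closed-form expression for $\delta S_0$ from Lemma~\ref{lema:3} (equation~\eqref{varS0}) and the expression for $\delta S_1$ from Lemma~\ref{vars1} into the decomposition~\eqref{varS}. Each of these is already written as a single integral over $M$ carrying the common factor $\sqrt{|g|}\,\delta g^{\mu\nu}\,\dx$, so no further integration by parts is required: all boundary contributions were discarded in the proofs of the preceding lemmas using the hypotheses $\delta g_{\mu\nu}|_{\partial M}=0$ and $\delta\partial_\lambda g_{\mu\nu}|_{\partial M}=0$. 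Consequently the combination collapses into one integral
\[
\delta S = \int_M E_{\mu\nu}\,\delta g^{\mu\nu}\,\sqrt{|g|}\,\dx,
\]
where $E_{\mu\nu}$ is by construction the left-hand side of~\eqref{EOM}.

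The second step is bookkeeping: I would collect the coefficients of $\delta g^{\mu\nu}$ and check that they reproduce~\eqref{EOM} term by term. The contribution $\frac{1}{16\pi G}(G_{\mu\nu}+\Lambda g_{\mu\nu})$ comes from $\frac{1}{16\pi G}\delta S_0$, while the term $-\tfrac12 g_{\mu\nu}\HH(R)\FF(\Box)\GG(R)$, the term $R_{\mu\nu}\Phi-K_{\mu\nu}\Phi$, and the double-sum expression over $n$ and $l$ come verbatim from $\delta S_1$ as displayed in Lemma~\ref{vars1}. This identifies $E_{\mu\nu}$ with the bracketed quantity in~\eqref{EOM}.

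Finally I would close the argument with the fundamental lemma. The ``if'' direction is immediate: if $E_{\mu\nu}=0$ everywhere on $M$, the integrand vanishes identically and $\delta S=0$ for every admissible variation. For the ``only if'' direction one uses that $\delta g^{\mu\nu}$ ranges over arbitrary smooth symmetric tensor fields supported in the interior of $M$ and that $\sqrt{|g|}>0$; hence $\delta S=0$ for all such $\delta g^{\mu\nu}$ forces the coefficient to vanish pointwise. The single point that deserves care — and the main obstacle to a fully rigorous statement — is the symmetry in the indices $\mu,\nu$: since the test tensors $\delta g^{\mu\nu}$ are symmetric, the fundamental lemma constrains only the symmetric part of $E_{\mu\nu}$, so~\eqref{EOM} must be read as the vanishing of that symmetric coefficient. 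I would therefore note that every term in $E_{\mu\nu}$ is either manifestly symmetric (the $g_{\mu\nu}$-terms, and $R_{\mu\nu}\Phi-K_{\mu\nu}\Phi$ with $K_{\mu\nu}=\nmu\nnu-g_{\mu\nu}\Box$ acting on a scalar) or becomes symmetric after summation over $l$ against the symmetric $\delta g^{\mu\nu}$ (the cross term $\pmu\Box^l\HH(R)\,\pnu\Box^{n-1-l}\GG(R)$, whose $l\mapsto n-1-l$ reindexing supplies the $\mu\leftrightarrow\nu$ symmetrization). With this observation the fundamental lemma yields exactly~\eqref{EOM}, completing the proof.
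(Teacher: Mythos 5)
Your proposal follows exactly the paper's route: the paper's proof is the one-line observation that $\delta S = \frac{1}{16\pi G}\delta S_0 + \delta S_1$ combined with Lemmas~\ref{lema:3} and~\ref{vars1}, which is precisely your assembly-and-collect argument. Your additional remarks on the fundamental lemma and on the symmetry of $\delta g^{\mu\nu}$ (in particular the $l \mapsto n-1-l$ reindexing of the cross term) are correct and supply detail the paper leaves implicit, but they do not change the approach.
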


\begin{proof}
Since we have $\delta S = \frac 1{16\pi G} \delta S_0 + \delta S_1$ the theorem follows from Lemmas \ref{lema:3} and \ref{vars1}.
\end{proof}

\section{Signature $(1,3)$}

In the physics settings, where functional $S$ represents an action, theorem \ref{t1} gives the equations of motion. From this point we assume that manifold $M$ is the four-dimensional homogeneous and isotropic one with signature $(1,3)$. Then the metric  has the Friedmann-Lema\^{\i}tre-Robertson-Walker (FLRW) form:
\begin{equation}
  \mathrm ds^2 = - \mathrm dt^2 + a(t)^2 \left( \frac{\mathrm dr^2}{1-kr^2} + r^2 \mathrm d\theta^2 + r^2 \sin^2 \theta \mathrm d\varphi^2\right).
\end{equation}

\begin{theo}
Suppose that manifold $M$ has the $FLRW$ metric. Then the expression \eqref{EOM} has two linearly independent equations:
\begin{align} \label{trace}
&\frac{ 4 \Lambda-R}{16 \pi G} -2 \HH(R)\FF(\Box)\GG(R) + \left(R \Phi + 3\Box \Phi \right) \nonumber \\
\phantom{A}&+ \sum_{n=1}^{\infty} f_n \sum_{l=0}^{n-1} \big( \partial^{\mu} \Box^l \HH(R) \pmu \Box^{n-1-l} \GG(R) + 2 \Box^l \HH(R) \Box^{n-l} \GG(R) \big) = 0,
\end{align}

\begin{align} \label{EOM-00}
&\frac{G_{00}+ \Lambda g_{00}}{16 \pi G}  -\frac{1}{2} g_{00} \HH(R)\FF(\Box)\GG(R) + \left(R_{00} \Phi - K_{00} \Phi \right) \nonumber \\
&+ \frac{1}{2} \sum_{n=1}^{\infty} f_n \sum_{l=0}^{n-1} \big( g_{00} \partial^\alpha \Box^l \HH(R) \partial_\alpha \Box^{n-1-l} \GG(R) - 2 \partial_0 \Box^l \HH(R) \partial_0 \Box^{n-1-l} \GG(R) +
g_{00} \Box^l \HH(R) \Box^{n-l} \GG(R) \big) = 0.
\end{align}
\end{theo}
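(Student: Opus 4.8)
The plan is to recognize the left-hand side of \eqref{EOM} as a symmetric two-tensor $E_{\mu\nu}$ and to exploit the symmetries of the FLRW metric to show that, as a tensor equation, $E_{\mu\nu}=0$ carries only two independent scalar components. First I would note that $E_{\mu\nu}$ is assembled entirely from $g_{\mu\nu}$, $R_{\mu\nu}$, and covariant derivatives of scalar functions of $R$ (namely $\HH(R)$, $\GG(R)$, $\Phi$, and the iterated expressions $\Box^l\HH(R)$, $\Box^{n-l}\GG(R)$). On a homogeneous and isotropic background $R$ depends on $t$ alone, so every scalar entering $E_{\mu\nu}$ is a function of $t$ only.

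Next I would invoke maximal spatial symmetry. The FLRW spacetime admits a six-parameter isometry group acting transitively and isotropically on the spatial slices, and $E_{\mu\nu}$, being built covariantly from data invariant under that group, must itself be invariant. A symmetric two-tensor invariant under the full isometry group of the constant-curvature slices necessarily has the perfect-fluid form $E_{\mu\nu}=A(t)\,u_\mu u_\nu + B(t)\,h_{\mu\nu}$, where $u_\mu$ is the unit timelike normal and $h_{\mu\nu}=g_{\mu\nu}+u_\mu u_\nu$ the spatial projector. Consequently the off-diagonal components vanish identically and the three spatial diagonal components are forced equal, so $E_{\mu\nu}=0$ reduces to exactly two scalar equations, $A=0$ and $B=0$.

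It then remains to select two convenient linearly independent combinations and compute them. Equation \eqref{EOM-00} is simply the $00$-component of \eqref{EOM}, read off directly. For the second equation I would contract \eqref{EOM} with $g^{\mu\nu}$: using $g^{\mu\nu}g_{\mu\nu}=4$, $g^{\mu\nu}R_{\mu\nu}=R$, the relation $g^{\mu\nu}K_{\mu\nu}=g^{\mu\nu}\nmu\nnu-4\Box=-3\Box$, and $g^{\mu\nu}\pmu\theta\,\pnu\chi=\partial^\alpha\theta\,\partial_\alpha\chi$, one processes the Einstein block into $(4\Lambda-R)/(16\pi G)$, the algebraic term into $-2\HH(R)\FF(\Box)\GG(R)$, the $\Phi$-block into $R\Phi+3\Box\Phi$, and each summand of the double series into $\partial^\mu\Box^l\HH(R)\,\pmu\Box^{n-1-l}\GG(R)+2\Box^l\HH(R)\,\Box^{n-l}\GG(R)$, yielding \eqref{trace}. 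Since $h_{00}=0$ gives $E_{00}=A$ while the trace equals $-A+3B$, the two equations \eqref{trace} and \eqref{EOM-00} are independent and together equivalent to the full system $\{A=0,\,B=0\}$.

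I expect the main obstacle to be the rigorous justification that maximal spatial symmetry collapses the spatial block to a single equation and annihilates the off-diagonal components. This is where one must verify that every individual term respects the symmetry: for instance $\pmu\theta\,\pnu\chi=\dot\theta\dot\chi\,\delta^0_\mu\delta^0_\nu$ for $t$-dependent scalars contributes only to the $00$-slot, and the Hessian $\nmu\nnu\Phi$ of a time-only scalar is diagonal with equal spatial entries on the FLRW background. A less elegant but self-contained alternative is to avoid the group-theoretic argument entirely and simply compute all components of $E_{\mu\nu}$ in FLRW coordinates, confirming directly that $E_{0i}=0$, that $E_{ij}=0$ for $i\ne j$, and that $E_{11}/g_{11}=E_{22}/g_{22}=E_{33}/g_{33}$.
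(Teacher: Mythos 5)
Your proposal is correct and lands on the same pair of equations the paper chooses (the trace and the $00$-component), and your contraction of \eqref{EOM} with $g^{\mu\nu}$ — using $g^{\mu\nu}K_{\mu\nu}=-3\Box$ and the factor bookkeeping in the double sum — reproduces \eqref{trace} exactly. Where you genuinely diverge from the paper is in the justification of the reduction to two equations. The paper asserts that the FLRW metric satisfies $R_{\mu\nu}=\tfrac{R}{4}g_{\mu\nu}$ and leans on this to dispose of the off-diagonal components and to collapse the $11$, $22$, $33$ equations into one; but that identity is false for a general FLRW metric (one has $R_{00}=-3\ddot a/a$ while $R_{ij}/g_{ij}=\ddot a/a+2(\dot a/a)^2+2k/a^2$, and these ratios agree only for special scale factors, e.g.\ de Sitter). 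Only the purely spatial block of $R_{\mu\nu}$ is proportional to the metric, which is all that is actually needed. Your isometry argument — that any symmetric two-tensor built covariantly from the metric and from scalars depending on $t$ alone must take the perfect-fluid form $A(t)u_\mu u_\nu+B(t)h_{\mu\nu}$, whence $E_{0i}=0$, $E_{ij}\propto g_{ij}$, and $\{E_{00},\,g^{\mu\nu}E_{\mu\nu}\}=\{A,\,-A+3B\}$ determines everything — is the sound version of this step and repairs the paper's shortcut. The one thing still owed in your write-up is the verification you yourself flag: that each individual term (in particular $\nabla_\mu\nabla_\nu\Phi$ and $\partial_\mu\theta\,\partial_\nu\chi$ for $t$-only scalars) indeed respects the symmetry, either by the group-theoretic argument or by the direct component computation you sketch as a fallback; once that is done the proof is complete.
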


\begin{proof}
The $FLRW$ metric satisfies $R_{\mu\nu} = \frac R4 g_{\mu\nu}$ and scalar curvature $R = 6 \left( \frac{\ddot a}a + \left(\frac{\dot a}a\right)^2 + \frac k{a^2}\right)$ depends only on $t$, hence equations \eqref{EOM} for $\mu \neq \nu$ are trivially satisfied. On the other hand, equations with indices $11$, $22$ and $33$ can be rewritten as

\begin{align*}
&g_{\mu\mu} \Big(\frac{- \frac R4 +\Lambda}{8 \pi G} - \HH(R)\FF(\Box)\GG(R) + \frac R2  \Phi +  \sum_{n=1}^{\infty} f_n \sum_{l=0}^{n-1} \big( \partial^\alpha \Box^l \HH(R) \partial_\alpha \Box^{n-1-l} \GG(R) + \Box^l \HH(R) \Box^{n-l} \GG(R) \big)\Big) = 0.
\end{align*}
Therefore these three equations are linearly dependent and there are only two linearly independent equations. The most convenient choice is the trace and $00$-equation.
\end{proof}

\begin{corollary}
For $\HH(R)= R^p $ and $\GG(R)= R^q$ the action \eqref{lag:1} becomes
\begin{equation} \label{lag:2}
S = \displaystyle \int_M \Big(\frac{R-2 \Lambda}{16 \pi G} +  R^p \mathcal{F}(\Box)R^q \Big) \sqrt{|g|} \;\dx,
\end{equation}
and equations of motion are
\begin{equation} \begin{aligned} \label{EOM:2}
&\frac{1}{16 \pi G}(G_{\mu\nu} + \Lambda g_{\mu\nu}) -\frac{1}{2} g_{\mu\nu} R^p \FF(\Box)R^q + \left(R_{\mu\nu} \Phi - K_{\mu\nu} \Phi \right) \\
&+ \frac{1}{2} \sum_{n=1}^{\infty} f_n \sum_{l=0}^{n-1} \big( g_{\mu\nu} \partial^\alpha \Box^l R^p \partial_\alpha \Box^{n-1-l} R^q - 2 \pmu \Box^l R^p \pnu \Box^{n-1-l} R^q + g_{\mu\nu} \Box^l R^p
\Box^{n-l} R^q \big) = 0,
\end{aligned} \end{equation}
where  $\Phi = pR^{p-1}\FF(\Box)R^{q} + qR^{q-1}\FF(\Box)R^{p}$.
\end{corollary}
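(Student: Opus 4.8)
The plan is to obtain both the action \eqref{lag:2} and the equations of motion \eqref{EOM:2} by direct specialization of the results already established, since the Corollary is simply Theorem \ref{t1} evaluated at the particular choice $\HH(R) = R^p$ and $\GG(R) = R^q$. First I would observe that these power functions are differentiable in $R$ (everywhere for non-negative integer exponents, and away from $R=0$ in the remaining cases), so the differentiability hypothesis under which \eqref{lag:1} and \eqref{EOM} were derived remains in force and Theorem \ref{t1} may be applied verbatim.

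For the action, substituting $\HH(R) = R^p$ and $\GG(R) = R^q$ into the nonlocal term of \eqref{lag:1} immediately gives $\HH(R)\FF(\Box)\GG(R) = R^p \FF(\Box) R^q$, which is exactly the integrand of \eqref{lag:2}; no computation is required there.

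For the equations of motion I would carry the substitution through \eqref{EOM} term by term. Every occurrence of $\HH(R)$ is replaced by $R^p$ and every occurrence of $\GG(R)$ by $R^q$, so that $\Box^l \HH(R) \mapsto \Box^l R^p$, $\Box^{n-1-l}\GG(R) \mapsto \Box^{n-1-l} R^q$, and similarly inside the gradient and $\Box^{n-l}$ factors of the double sum. The algebraic terms $-\tfrac12 g_{\mu\nu}\HH(R)\FF(\Box)\GG(R)$ and the Einstein part $(G_{\mu\nu} + \Lambda g_{\mu\nu})/(16\pi G)$ transcribe unchanged, reproducing the first line of \eqref{EOM:2} together with all three summed contributions in its second line.

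The only genuinely non-routine step is evaluating the scalar $\Phi$. Recalling from Lemma \ref{vars1} that $\Phi = \HH'(R)\FF(\Box)\GG(R) + \GG'(R)\FF(\Box)\HH(R)$, I would differentiate the power functions, $\HH'(R) = p R^{p-1}$ and $\GG'(R) = q R^{q-1}$, and substitute to obtain $\Phi = p R^{p-1}\FF(\Box)R^q + q R^{q-1}\FF(\Box)R^p$, which matches the stated expression. Since every replacement is a literal substitution and $\Phi$ follows from a single application of the chain rule, I do not expect any real obstacle; the only point to watch is that the differentiation of $R^p$ and $R^q$ is legitimate, i.e. that one remains away from $R=0$ in the non-smooth cases, which is precisely what the differentiability hypothesis of Theorem \ref{t1} guarantees.
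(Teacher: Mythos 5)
Your proposal is correct and coincides with what the paper does: the corollary is an immediate specialization of Theorem \ref{t1} obtained by substituting $\HH(R)=R^p$, $\GG(R)=R^q$ throughout \eqref{EOM} and computing $\Phi$ from the definition in Lemma \ref{vars1} via $\HH'(R)=pR^{p-1}$, $\GG'(R)=qR^{q-1}$; the paper leaves this substitution implicit and gives no separate proof.
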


\begin{corollary}
For $\HH(R)= R^p $ and $\GG(R)= R^q$ the equations of motion \eqref{EOM:2} are equivalent to the following two equations:
\begin{equation} \begin{aligned} \label{trace:2}
&\frac{1}{16 \pi G} ( 4 \Lambda - R) -2 R^p \FF(\Box)R^q + \left(R \Phi + 3\Box \Phi \right) + \sum_{n=1}^{\infty} f_n \sum_{l=0}^{n-1} \big( \partial^\mu \Box^l R^p \pmu \Box^{n-1-l} R^q + 2 \Box^l R^p \Box^{n-l} R^q \big) = 0,
\end{aligned} \end{equation}

\begin{equation} \begin{aligned} \label{EOM-00:2}
&\frac{1}{16 \pi G} ( G_{00} + \Lambda g_{00}) -\frac{1}{2} g_{00} R^p \FF(\Box)R^q  + \left(R_{00} \Phi - K_{00} \Phi \right) \\
&+ \frac{1}{2} \sum_{n=1}^{\infty} f_n \sum_{l=0}^{n-1} \big( g_{00} \partial^\alpha \Box^l R^p \partial_\alpha \Box^{n-1-l} R^q + g_{00} \Box^l R^p \Box^{n-l} R^q  - 2 \partial_0 \Box^l R^p \partial_0 \Box^{n-1-l} R^q\big) = 0.
\end{aligned} \end{equation}
\end{corollary}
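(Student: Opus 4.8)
The plan is to deduce this corollary directly from the two results already in hand: the preceding corollary, which records that the substitution $\HH(R) = R^p$, $\GG(R) = R^q$ turns the general tensor equation \eqref{EOM} into \eqref{EOM:2} with $\Phi = pR^{p-1}\FF(\Box)R^q + qR^{q-1}\FF(\Box)R^p$, and the second theorem, which shows that on an FLRW background the single tensor equation \eqref{EOM} collapses to exactly two linearly independent scalar equations, the trace \eqref{trace} and the $00$-component \eqref{EOM-00}. No new analytic input is needed; the work is to compose these two statements and verify the specialized forms.

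First I would note that \eqref{EOM:2} is literally \eqref{EOM} evaluated at $\HH(R) = R^p$, $\GG(R) = R^q$, which is exactly the content of the previous corollary and requires no fresh computation. Consequently the reduction argument of the second theorem applies verbatim to \eqref{EOM:2}: that argument used only the FLRW identities $R_{\mu\nu} = \frac R4 g_{\mu\nu}$ and the fact that $R = R(t)$ depends on time alone, so that the off-diagonal equations vanish and the $11$-, $22$-, $33$-equations become proportional through the common factor $g_{\mu\mu}$. None of this invokes the particular form of $\HH$ or $\GG$, so the conclusion that there are precisely two independent equations carries over unchanged.

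Next I would perform the substitution term by term in the two reduced equations \eqref{trace} and \eqref{EOM-00}. Each occurrence of $\HH(R)\FF(\Box)\GG(R)$ becomes $R^p\FF(\Box)R^q$; each factor $\Box^l\HH(R)$ becomes $\Box^l R^p$, each $\Box^{n-1-l}\GG(R)$ becomes $\Box^{n-1-l}R^q$, and likewise $\Box^{n-l}\GG(R)\to\Box^{n-l}R^q$; and $\Phi$ assumes the specialized form above. Matching these expressions against \eqref{trace:2} and \eqref{EOM-00:2} termwise confirms that the two pairs coincide, establishing the forward implication. The reverse implication—that \eqref{trace:2} and \eqref{EOM-00:2} together imply the full system \eqref{EOM:2}—follows because these two equations already exhaust a complete set of linearly independent components, so recovering them recovers every component.

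Since every step is bookkeeping, I do not expect a genuine obstacle. The only point deserving a moment's care is confirming that the linear dependence of the $11$-, $22$-, $33$-equations survives the substitution, so that trace and $00$ remain a complete independent pair. This is immediate: that dependence rested solely on $R_{\mu\nu} = \frac R4 g_{\mu\nu}$ and the spatial homogeneity of $R$, both of which hold on any FLRW metric irrespective of the choice $\HH(R) = R^p$, $\GG(R) = R^q$.
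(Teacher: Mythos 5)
Your argument is correct and is precisely the route the paper intends: the corollary is obtained by composing the preceding corollary (the substitution $\HH(R)=R^p$, $\GG(R)=R^q$) with the FLRW reduction theorem, whose proof indeed uses only $R_{\mu\nu}=\frac{R}{4}g_{\mu\nu}$ and $R=R(t)$ and is therefore insensitive to the choice of $\HH$ and $\GG$. The paper omits the proof as evident, and your termwise verification fills in exactly that omitted bookkeeping.
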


\section{Constant scalar curvature}

\begin{theo} Let $R = R_0 = constant$. Then, solution of equations of motion \eqref{EOM:2} has the  form
\begin{enumerate}
\item  For $R_0 > 0$,
$a(t) = \sqrt{\frac{6k}{R_0} + \sigma e^{\sqrt {\frac {R_0}3} t} + \tau e^{- \sqrt {\frac {R_0}3} t}}$,
where $  9k^2 = R_0^2\sigma\tau $, $\;\sigma, \tau \in \Real$.
\item For $R_0 = 0$,
$a(t) = \sqrt{ -k t^2 + \sigma t + \tau}$, where $ \sigma^2 + 4k \tau=0 $, $\;\sigma, \tau \in \Real$
\item For $R_0 < 0$,
$a(t) = \sqrt{\frac{6k}{R_0} + \sigma \cos {\sqrt {\frac{-R_0}3} t} + \tau \sin {\sqrt {\frac {-R_0}3} t}}$, where $ 36k^2 = R_0^2(\sigma^2+ \tau^2)$, $\;\sigma, \tau \in \Real$,
\end{enumerate}
where $k$ is curvature parameter.

\end{theo}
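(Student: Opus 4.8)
The plan is to exploit the drastic simplification that constancy of the scalar curvature produces in the equations of motion, and then to reduce the surviving tensor equation to two ordinary differential equations for the scale factor $a(t)$.

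First I would substitute $R = R_0 = \mathrm{const}$ into \eqref{EOM:2}. Since $\Box$ annihilates constants, every term carrying at least one factor $\Box^n$ with $n\ge 1$ or a derivative such as $\partial_\alpha \Box^l R^{p}$ vanishes; in particular $\FF(\Box)R_0^{q}=f_0 R_0^{q}$, the whole double sum drops out, $\Phi=f_0(p+q)R_0^{p+q-1}$ is a constant, and $K_{\mu\nu}\Phi=(\nmu\nnu-g_{\mu\nu}\Box)\Phi=0$. What remains is $\left(\tfrac{1}{16\pi G}+\Phi\right)R_{\mu\nu}+B\,g_{\mu\nu}=0$ for a constant $B$ depending only on $R_0$ and the parameters. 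Taking the trace identifies $B$ and forces the Einstein condition $R_{\mu\nu}=\tfrac{R_0}{4}g_{\mu\nu}$ together with one algebraic relation selecting the admissible $R_0$; equivalently, the trace equation \eqref{trace:2} reduces to that algebraic relation while the $00$-equation \eqref{EOM-00:2} collapses to $R_{00}=\tfrac{R_0}{4}g_{00}$.

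Next I would translate these two scalar statements into ODEs for $a$. Writing $b=a^2$ and using $R=6\big(\ddot a/a+\dot a^2/a^2+k/a^2\big)$, the hypothesis $R=R_0$ becomes the linear equation $\ddot b-\tfrac{R_0}{3}b+2k=0$, whereas $R_{00}=\tfrac{R_0}{4}g_{00}$, i.e.\ $\ddot a=\tfrac{R_0}{12}a$, turns (after eliminating $\ddot b$) into the first integral $\dot b^2=\tfrac{R_0}{3}b^2-4kb$. Solving the linear equation in the three regimes gives $b=\tfrac{6k}{R_0}+\sigma e^{\sqrt{R_0/3}\,t}+\tau e^{-\sqrt{R_0/3}\,t}$ for $R_0>0$, then $b=-kt^2+\sigma t+\tau$ for $R_0=0$, and $b=\tfrac{6k}{R_0}+\sigma\cos\sqrt{-R_0/3}\,t+\tau\sin\sqrt{-R_0/3}\,t$ for $R_0<0$, each with two free constants; since $a=\sqrt{b}$ this already reproduces the three stated functional forms. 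The conceptual point to flag is that $R=R_0$ alone would leave a genuine two-parameter family, and the extra relation between $\sigma$ and $\tau$ is imposed by the non-trace part of the EOM, i.e.\ by the first integral.

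Finally I would substitute each general solution into $\dot b^2=\tfrac{R_0}{3}b^2-4kb$. The genuinely $t$-dependent pieces cancel (the linear terms in $\sigma e^{\pm\omega t}$, or the $t^2$ and $t$ terms for $R_0=0$, or the $\cos,\sin$ terms for $R_0<0$), leaving a single $t$-independent relation, which I expect to be exactly $9k^2=R_0^2\sigma\tau$, $\sigma^2+4k\tau=0$, and $36k^2=R_0^2(\sigma^2+\tau^2)$ respectively. I anticipate that the only real obstacle lies here: organizing the exponential, polynomial, and trigonometric algebra so that the $t$-dependent parts manifestly drop, using the identity $\tfrac{R_0}{3}\cdot\tfrac{6k}{R_0}=2k$ that makes the cross terms involving the particular solution cancel, and thereby extracting the correct quadratic constraint on $(\sigma,\tau)$. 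Everything preceding this step is routine once the nonlocal terms are seen to vanish.
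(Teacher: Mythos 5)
Your proposal is correct and follows essentially the same route as the paper: solve the linear ODE $3\ddot b - R_0 b = -6k$ coming from $R=R_0$ with $b=a^2$, then impose the surviving non-trace part of the equations of motion, which is exactly the paper's branch $R_0+4R_{00}=0$ of its solvability condition, to pin down the relation between $\sigma$ and $\tau$. Your rewriting of $R_{00}=\tfrac{R_0}{4}g_{00}$ as the first integral $\dot b^2=\tfrac{R_0}{3}b^2-4kb$ is just a tidy way of carrying out the computation the paper states without detail, and it does reproduce the three constraints $9k^2=R_0^2\sigma\tau$, $\sigma^2+4k\tau=0$, $36k^2=R_0^2(\sigma^2+\tau^2)$.
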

\begin{proof}
Since $R = R_0$ one has
\begin{equation}
6\Big(\frac{\ddot a}{a} + \big(\frac{\dot a}{a}\big)^2 +
\frac{k}{a^2}\Big)= R_0.
\end{equation}

The change of variable $b(t) = a^2(t)$ yields second order linear
differential equation with constant coefficients
\begin{equation}
3 \ddot b - R_0 b = -6k.
\end{equation}

Depending on the sign of $R_0$ we have the following solutions for
$b(t)$
\begin{equation} \begin{aligned} \label{11.03.13:1}
R_0 > 0,  \qquad & b(t) = \frac{6k}{R_0} + \sigma e^{\sqrt {\frac {R_0}3} t} + \tau e^{- \sqrt {\frac {R_0}3} t}, \\
R_0 = 0,  \qquad & b(t) = -k t^2 + \sigma t + \tau, \\
R_0 < 0,  \qquad & b(t) = \frac{6k}{R_0} + \sigma \cos {\sqrt {\frac{-R_0}3} t} + \tau \sin {\sqrt {\frac {-R_0}3} t}.
\end{aligned}\end{equation}

Putting $R = R_0 = const$ into  \eqref{trace:2} and
\eqref{EOM-00:2} one obtains the following two equations:
\begin{equation} \begin{aligned} \label{11.03.13:2}
 f_0 R_0^{p+q} \Big(p+q -2\Big) &= \frac{R_0-4\Lambda}{16 \pi G}, \qquad
 f_0 R_0^{p+q-1} \Big( \frac 12 R_0 + (p+q)R_{00} \Big) &=
\frac{-G_{00} + \Lambda}{16 \pi G} .
\end{aligned} \end{equation}

Equations \eqref{11.03.13:2} will have a solution if and only if
\begin{equation} \label{iff}
R_0^{p+q-1} (R_0+ 4R_{00})(R_0 + (2\Lambda-R_0) (p+q)) = 0.
\end{equation}

In the first case we take $R_0+ 4R_{00} = 0$ that yields the
following conditions on the parameters $\sigma$ and $\tau$:

\begin{equation} \begin{aligned} \label{11.03.13:3}
R_0 > 0,  \qquad & 9k^2 = R_0^2\sigma\tau, \\
R_0 = 0,  \qquad & \sigma^2 + 4k \tau=0, \\
R_0 < 0,  \qquad & 36k^2 = R_0^2(\sigma^2+ \tau^2). &\qedhere
\end{aligned}\end{equation}
\end{proof}

Solutions given in \eqref{11.03.13:1} together with conditions
\eqref{11.03.13:3} restrict the possibilities for the parameter
$k$.
\begin{theo}
\begin{enumerate}
\item If $ R_0 > 0 $ then for $k=0$ there is solution with constant Hubble parameter,
for $k=+1$ the solution is
$a(t) = \sqrt{\frac{12}{R_0}} \cosh \frac 12 \left(\sqrt{\frac{R_0}{3}}
t + \varphi\right)$ and for $k=-1$ it is
$ \, a(t) = \sqrt{\frac{12}{R_0}} \left|\sinh \frac 12 \left(\sqrt{\frac{R_0}{3}} t + \varphi\right)\right|$, where $\sigma+\tau = \frac{6}{R_0} \cosh \varphi$ and $\sigma-\tau =
\frac{6}{R_0} \sinh\varphi$.
\item If $ R_0 = 0 $ then for $k=0$ the solution is $a(t) = \sqrt \tau = const$ and
for $k=-1$ the solution is $a(t)= |t+ \frac \sigma2|$.
\item If $ R_0 < 0 $ then for $k=-1$ the solution is $a(t) = \sqrt{\frac{-12}{R_0}} \left|\cos \frac 12
\left(\sqrt{-\frac{R_0}{3}} t - \varphi\right)\right|$, where $\sigma = \frac{-6}{R_0} \cos \varphi$ and $\tau = \frac{-6}{R_0}
\sin \varphi$.
\end{enumerate}
\end{theo}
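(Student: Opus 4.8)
The plan is to treat the final theorem as a specialization of the previous one: I already have the general solutions $b(t)=a^2(t)$ in \eqref{11.03.13:1} together with the constraints \eqref{11.03.13:3} relating $\sigma,\tau$ to the curvature parameter $k$. Since the theorem only asserts results for $k\in\{0,+1,-1\}$, I would simply insert each value of $k$ into the relevant constraint, read off the resulting $b(t)$, and set $a(t)=\sqrt{b(t)}$. The one genuine requirement is that $b(t)=a^2(t)$ be nonnegative; this positivity test is what decides which pairs $(R_0,k)$ actually yield a scale factor and which must be discarded (and hence are absent from the list).

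First, for $R_0>0$ the constraint is $9k^2=R_0^2\sigma\tau$. For $k=0$ this forces $\sigma\tau=0$, so one exponential survives and $a(t)$ has constant Hubble parameter $H=\dot a/a=\tfrac12\sqrt{R_0/3}$. For $k=\pm1$ one has $\sigma\tau=9/R_0^2>0$, so $\sigma,\tau$ share a sign; positivity of $b$ selects $\sigma,\tau>0$, and I would introduce the phase $\varphi$ via $\sigma+\tau=\tfrac{6}{R_0}\cosh\varphi$, $\sigma-\tau=\tfrac{6}{R_0}\sinh\varphi$, so that $\sigma e^{\omega t}+\tau e^{-\omega t}=\tfrac{6}{R_0}\cosh(\omega t+\varphi)$ with $\omega=\sqrt{R_0/3}$. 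The half-angle identities $1+\cosh x=2\cosh^2\tfrac x2$ and $\cosh x-1=2\sinh^2\tfrac x2$ then convert $b(t)$ into a perfect square, giving the $\cosh$ form for $k=+1$ and the $|\sinh|$ form for $k=-1$.

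Next, for $R_0=0$ the constraint $\sigma^2+4k\tau=0$ gives $\sigma=0$ (hence $a=\sqrt\tau$) when $k=0$, and $\tau=\sigma^2/4$ when $k=-1$, so $b(t)=t^2+\sigma t+\sigma^2/4=(t+\sigma/2)^2$ and $a(t)=|t+\sigma/2|$; the case $k=+1$ forces $b(t)=-(t-\sigma/2)^2\le0$ and is therefore omitted. Finally, for $R_0<0$ I write $\sigma=\tfrac{-6}{R_0}\cos\varphi$, $\tau=\tfrac{-6}{R_0}\sin\varphi$ (legitimate since $\sigma^2+\tau^2=36/R_0^2$ and $-6/R_0>0$), use $\sigma\cos\omega t+\tau\sin\omega t=\tfrac{-6}{R_0}\cos(\omega t-\varphi)$ with $\omega=\sqrt{-R_0/3}$, and apply $1+\cos x=2\cos^2\tfrac x2$ to obtain the $|\cos|$ form for $k=-1$; here $k=0$ collapses $b$ to zero and $k=+1$ makes $b\le0$, so both are excluded.

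The routine computations are just these identity substitutions followed by taking square roots. The only point demanding care, and the real content of the statement, is the positivity screening of $b(t)=a^2(t)$: one must check case by case that the reparametrization by $\varphi$ is consistent with the signs of $\sigma,\tau$ dictated by the constraint, and that the combinations forcing $b\le0$ are correctly recognized as admitting no scale factor. I expect the bookkeeping here to be the most error-prone step, since the sign of $6/R_0$ flips with the sign of $R_0$ and governs both the phase parametrization and the eventual positivity.
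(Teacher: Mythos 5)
Your proposal is correct and follows essentially the same route as the paper: substitute each admissible $k$ into the constraints \eqref{11.03.13:3}, reparametrize $\sigma,\tau$ by the phase $\varphi$, use the half-angle identities to write $b(t)$ as a perfect square, and discard the $(R_0,k)$ combinations forcing $b\le 0$. Your explicit sign bookkeeping (e.g.\ noting that positivity selects $\sigma,\tau>0$ when $R_0>0$, $k=\pm1$, and that $k=0$ collapses $b$ when $R_0<0$) is slightly more careful than the paper's, but the argument is identical in substance.
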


\begin{proof}

Let $R_0>0$. Set $k=0$ then we obtain solution with constant Hubble parameter.
Alternatively, if we set $k=+1$ then there is $\varphi$ such that
$\sigma+\tau = \frac{6}{R_0} \cosh \varphi$ and $\sigma-\tau =
\frac{6}{R_0} \sinh\varphi$. Moreover, we obtain
\begin{equation} \begin{aligned}
b(t) &= \frac{12}{R_0} \cosh^2 \frac 12 \left(\sqrt{\frac{R_0}{3}} t + \varphi\right), \qquad
a(t) &= \sqrt{\frac{12}{R_0}} \cosh \frac 12 \left(\sqrt{\frac{R_0}{3}} t + \varphi\right).
\end{aligned}\end{equation}
At the end, if we set $k=-1$ one can transform $b(t)$ to
\begin{equation} \begin{aligned}
b(t) &= \frac{12}{R_0} \sinh^2 \frac 12 \left(\sqrt{\frac{R_0}{3}} t + \varphi\right), \qquad
a(t) &= \sqrt{\frac{12}{R_0}} \left|\sinh \frac 12 \left(\sqrt{\frac{R_0}{3}} t + \varphi\right)\right|.
\end{aligned}\end{equation}

Let $R_0=0$. If  $k=0$ then function $b(t)$ and consequently
$a(t)$ become constants which leads to a solution $a(t) = \sqrt
\tau = const$. On the other hand if $k \neq 0$ then we can write
\begin{equation} \begin{aligned}
b(t) &= -k(t - \frac{\sigma}{2k})^2.
\end{aligned}\end{equation}
If $k=+1$ then there are no solutions for the scale factor $a(t) ,$ because $b(t)
\leq 0$. On the other hand, when $k=-1$ the scalar factor becomes
\begin{equation} \begin{aligned}
a(t) &= |t+ \frac \sigma2|.
\end{aligned}\end{equation}

In the last case, let $R_0 < 0$. If $k=-1$ we can find $\varphi$ such that
$\sigma = \frac{-6}{R_0} \cos \varphi$ and $\tau = \frac{-6}{R_0}
\sin \varphi$ and rewrite $a(t)$ and $b(t)$ as
\begin{equation} \begin{aligned}
b(t) &= \frac{-12}{R_0} \cos^2 \frac 12 \left(\sqrt{-\frac{R_0}{3}} t - \varphi\right), \qquad
a(t) &= \sqrt{\frac{-12}{R_0}} \left|\cos \frac 12 \left(\sqrt{-\frac{R_0}{3}} t - \varphi\right)\right|.
\end{aligned}\end{equation}
In the case $k=+1$ one can transform $b(t)$ to $ b(t) = \frac{12}{R_0} \sin^2 \frac 12 \left(\sqrt{-\frac{R_0}{3}} t -
\varphi\right)$, which is non positive and hence yields no solutions.
\end{proof}

\begin{theo} \label{theorem 4.3}
If in \eqref{iff} we take
\begin{equation} \begin{aligned} \label{11.03.13:4}
R_0^{p+q-1}\big(R_0 + (p+q) (2\Lambda-R_0)\big) = 0
\end{aligned}\end{equation} then: \\
\begin{enumerate}
\item For $p+q\geq 1$ there is obvious solution $R_0=0$. In particular if $p+q=1$ then \eqref{11.03.13:4} is satisfied for any
$R_0 \neq 0$ if  $\Lambda = 0$.
\item For $p+q=0$ there is no solution.
\item For $p+q \neq 0,1$ there is a unique value $ R_0 = \displaystyle \frac{2\Lambda(p+q)}{p+q-1} $ that  gives a solution. Since $p$ and $q$ are integers the
value of $R_0$  in the last equation is always positive,  and for $ k= 0$ the
solution $b(t)$ is a linear combination of exponential functions.
\end{enumerate}
\end{theo}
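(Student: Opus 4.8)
The plan is to treat \eqref{11.03.13:4} as a single algebraic equation in the unknown $R_0$, with the integer $s := p+q$ as a parameter, and to carry out a case analysis on the value of $s$. First I would rewrite the bracket in \eqref{11.03.13:4} as $R_0 + (p+q)(2\Lambda - R_0) = (1-s)R_0 + 2s\Lambda$, so that the equation reads $R_0^{\,s-1}\big((1-s)R_0 + 2s\Lambda\big) = 0$. The whole structure of the proof is dictated by the exponent $s-1$: whether it is positive, zero, or negative decides both whether $R_0^{\,s-1}$ contributes the root $R_0 = 0$ and whether it sits in the numerator or the denominator.

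For part (1), $s \ge 1$, I would note that when $s \ge 2$ the exponent $s-1 \ge 1$, so the factor $R_0^{\,s-1}$ vanishes at $R_0 = 0$ and this is the obvious solution. The borderline $s=1$ is genuinely special and deserves the separate ``in particular'' statement: there $R_0^{\,s-1} = R_0^{0}$ and the bracket collapses to $2\Lambda$, so the equation becomes $2\Lambda = 0$ and is satisfied by every $R_0 \neq 0$ exactly when $\Lambda = 0$. For part (2), $s = 0$, the factor becomes $R_0^{-1}$ and the bracket is just $R_0$, so the left-hand side equals $R_0^{-1}\cdot R_0 = 1$, which never vanishes; hence there is no solution.

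For part (3), $s \neq 0,1$, the only way to satisfy the equation with $R_0 \neq 0$ is to kill the linear factor, giving the unique value $R_0 = \frac{2s\Lambda}{s-1} = \frac{2\Lambda(p+q)}{p+q-1}$. The sign claim is where I would be most careful: I would observe that for any integer $s \neq 0,1$ the quotient $\frac{s}{s-1}$ is a ratio of two integers of the same sign (both positive when $s \ge 2$, both negative when $s \le -1$), hence strictly positive; therefore $R_0$ carries the sign of $\Lambda$, and under the physical assumption $\Lambda > 0$ it is positive. Finally, with $R_0 > 0$ I would feed this value back into the $R_0 > 0$ branch of \eqref{11.03.13:1} and set $k = 0$, which deletes the constant term $\frac{6k}{R_0}$ and leaves $b(t) = \sigma e^{\sqrt{R_0/3}\,t} + \tau e^{-\sqrt{R_0/3}\,t}$, a linear combination of exponentials.

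The only real obstacle is bookkeeping rather than depth: the expression $R_0^{p+q-1}$ must be interpreted correctly across the regimes $p+q \ge 2$, $p+q = 1$, $p+q = 0$, and $p+q < 0$, since it alternately supplies the trivial root, becomes a nonvanishing constant, or moves into the denominator. The positivity statement in (3) silently uses $\Lambda > 0$; I would make that hypothesis explicit, because $R_0 = \frac{2\Lambda(p+q)}{p+q-1}$ is positive precisely when $\Lambda$ is, the integrality of $p,q$ only guaranteeing $\frac{p+q}{p+q-1} > 0$.
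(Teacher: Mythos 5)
Your proposal is correct and, since the paper dismisses this result with ``Proof of this theorem is evident,'' it supplies exactly the direct case analysis on $s=p+q$ that the authors leave implicit. Your two points of added care are both well taken: the positivity of $R_0=\frac{2\Lambda(p+q)}{p+q-1}$ in part (3) does rest on the unstated hypothesis $\Lambda>0$ (integrality of $p,q$ only gives $\frac{p+q}{p+q-1}>0$), and restricting the ``obvious solution $R_0=0$'' to $p+q\geq 2$ while treating $p+q=1$ separately is the right reading, since for $p+q=1$ the left-hand side reduces to $2\Lambda$ and $R_0=0$ is not a root unless $\Lambda=0$.
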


Proof of this theorem is evident.




\end{document}